\newcommand{\N}{\mathbb{N}}
\newcommand{\C}{\mathbb{C}}
\DeclarePairedDelimiter{\set}{\lbrace}{\rbrace}
\DeclarePairedDelimiter{\abs}{\lvert}{\rvert}
\newcommand{\tp}{^{\textrm{T}}}
\newcommand{\ct}{^{\dagger}}
\newcommand{\ie}{\textit{i.e.}}
\newcommand{\TSAT}{{\texttt{3-SAT}}}
\newcommand{\TCOL}{{\texttt{3-COLORING}}}
\newcommand{\YES}{{\texttt{YES}}}
\newcommand{\NO}{{\texttt{NO}}}
\newcommand{\true}{{\texttt{true}}}
\newcommand{\false}{{\texttt{false}}}
\newcommand{\smx}[1]{\left(\begin{smallmatrix} #1\end{smallmatrix}\right)}
\renewcommand{\vec}[1]{\bm{#1}}
\DeclareMathOperator{\tr}{Tr}
\DeclareMathOperator{\diag}{diag}
\newcommand{\Thm}[1]{\hyperref[thm:#1]{Theorem~\ref*{thm:#1}}}
\newcommand{\Lem}[1]{\hyperref[lem:#1]{Lemma~\ref*{lem:#1}}}
\newcommand{\Cor}[1]{\hyperref[cor:#1]{Corollary~\ref*{cor:#1}}}
\newcommand{\Def}[1]{\hyperref[def:#1]{Definition~\ref*{def:#1}}}
\newcommand{\Obs}[1]{\hyperref[obs:#1]{Observation~\ref*{obs:#1}}}
\newcommand{\Sec}[1]{\hyperref[sec:#1]{Section~\ref*{sec:#1}}}
\newcommand{\Fig}[1]{\hyperref[fig:#1]{Figure~\ref*{fig:#1}}}
\newcommand{\Tab}[1]{\hyperref[tab:#1]{Table~\ref*{tab:#1}}}
\newcommand{\EqRef}[1]{\hyperref[eq:#1]{(\ref*{eq:#1})}}
\newcommand{\Eq}[1]{Equation~\hyperref[eq:#1]{(\ref*{eq:#1})}}
\newtheorem{theorem}{Theorem}
\newtheorem{lemma}{Lemma}
\newtheorem{fact}{Fact}
\newtheorem{cor}{Corollary}
\theoremstyle{definition}
\newtheorem{definition}{Definition}
\def\be{\begin{equation}}
\def\ee{\end{equation}}
\title{Oddities of quantum colorings}
\author{Laura Man\v{c}inska \and David E.~Roberson}
\date{}
\begin{document}

\maketitle

\begin{center}
In memory of Prof.~R\={u}si\c n\v{s} M.~Freivalds
\end{center}\vspace{0.1in}

\begin{abstract}
We study quantum analogs of graph colorings and chromatic number. Initially defined via an interactive protocol, quantum colorings can also be viewed as a natural operator relaxation of graph coloring. Since there is no known algorithm for producing nontrivial quantum colorings, the existing examples rely on ad hoc constructions. Almost all of the known constructions of quantum $d$-colorings start from $d$-dimensional orthogonal representations. We show the limitations of this method by exhibiting, for the first time, a graph with a 3-dimensional orthogonal representation which cannot be quantum 3-colored, and a graph that can be quantum 3-colored but has no 3-dimensional orthogonal representation. Together these examples show that the quantum chromatic number and orthogonal rank are not directly comparable as graph parameters. The former graph also provides an example of several interesting, and previously unknown, properties of quantum colorings. The most striking of these is that adding a new vertex adjacent to all other vertices does not necessarily increase the quantum chromatic number of a graph. This is in stark contrast to the chromatic number and many of its variants. This graph also provides the smallest known example (14 vertices) exhibiting a separation between chromatic number and its quantum analog.
\end{abstract}

\section{Introduction}
\label{sec:Intro}

Graph coloring and chromatic number are standard and well-investigated topics from the early days of graph theory. Given some number of colors, $c$, a $c$-coloring of a graph $G$ is an assignment of colors to vertices of $G$, where adjacent vertices receive different colors. The chromatic number, $\chi(G)$, is the smallest number of colors for which $G$ admits a $c$-coloring. The quantum generalization of graph coloring and chromatic number is comparatively new and was first considered in \cite{Galliard02,Cleve04} and further investigated in \cite{Avis,Cameron07,Fukawa11,SS12,MSS,qhomos}.

In this work we aim to collect some interesting examples of quantum colorings, as well as exhibit some unexpected properties they possess. In many ways, quantum colorings and quantum chromatic number behave very similarly to their classical counterparts. For example, the chromatic and quantum chromatic numbers of complete graphs coincide, a graph is 2-colorable if and only if it is quantum 2-colorable, and both classical and quantum $c$-colorability is preserved under taking subgraphs. However, here we will show that quantum colorings and chromatic number can misbehave in a bizarre manner. Specifically, it is easy to see that adding a new vertex adjacent to all other vertices of a graph causes the chromatic number to increase by one. We will show in \Sec{oddities} that this is not always the case for quantum chromatic number.

There are very few known methods for constructing non-trivial quantum colorings. In fact, except for a few cases, all known constructions of quantum $d$-colorings begin the same way: with an assignment of vectors from $\C^d$ to the vertices of the graph such that adjacent vertices receive orthogonal vectors. Here we show the limits of this construction. In particular, we show that it is possible for a graph to have such an assignment of $d$-dimensional vectors but no quantum $d$-coloring and vice versa. For the former, we will make use of an orthogonality graph, $G_{13}$, constructed from 13 vectors in $\mathbb{R}^3$ with entries from the set $\set{-1,0,1}$. We build on an insight of Burgdorf and Piovesan~\cite{perscom}, who used a computer system for algebraic computations (GAP) to observe that $G_{13}$ is not quantum 3-colorable. Unfortunately, the problem-size is too demanding for the computer to provide a certificate of impossibility. In this work we present an explicit human-readable proof of this fact (see \Thm{notq3col}). For the latter example, we make use of a construction for taking \TSAT{} instances to \TCOL{} instances, which was already used in the context of quantum colorings~\cite{Fukawa11,Ji12}. We show that any graph constructed in this way has a 3-coloring if and only if it has an assignment of vectors in $\C^3$ meeting the above orthogonality condition. Together, these two examples show that quantum chromatic number is incomparable with orthogonal rank, $\xi(G)$, (see \Def{Orank}) which was not known prior to this work.

In general, there is no guaranteed method for finding non-trivial quantum colorings, and the corresponding decision problem is not known to be decidable. The difficulty is that, unlike the classical case in which one can simply search all possibilities, the search space for a quantum $c$-colorings is not only infinite, but also not compact. Because of this there are few nontrivial examples, and relatively little is known. This work provides new examples of quantum colorings which exhibit unexpected properties. The graph $G_{13}$ is particularly interesting as it is the first witness of several such properties.

This paper is organized as follows. In \Sec{Notation} we briefly explain the used notation. Next, in \Sec{Qchrom} we formally introduce quantum colorings and chromatic number, $\chi_q(G)$. We explain the known general constructions of quantum colorings in \Sec{Constructions}. Then we proceed to define graph $G_{13}$ and discuss its classical and quantum parameters in \Sec{G13}. Most notably, we show that $\chi_q(G_{13}) = 4$ (see \Thm{notq3col}) which then gives us a separation $\chi_q(G_{13})>\xi(G_{13})$. Finally, in \Sec{Incomp}, we use \TSAT{} to \TCOL{} reduction to produce a graph $H$ with $\xi(H)>\chi_q(H)$.

\subsection{Notation}
\label{sec:Notation}
We use boldface letters, such as $\vec{r}$, to differentiate vectors from other symbols such as vertices of a graph. We use $\vec{r}\ct$, $\vec{r}\tp$, and $\vec{r}^*$ to denote complex conjugate transpose, transpose, and entry-wise complex conjugation of the vector $\vec{r}$. We use $\C^{n\times n}$ to denote the set of all $n\times n$ matrices with complex entries and we write $A \succeq B$ to indicate that the operator $A-B$ is positive semidefinite. Finally we use $[k]$ to denote the set $\set{1,\dotsc, k}$.

\paragraph{Graph terminology.} For us a graph $G = (V,E)$ consists of a finite set, $V$, of vertices and a set, $E$, of unordered pairs of these vertices which we call edges. Thus we only consider finite simple graphs without loops or multiple edges. We will often use $V(G)$ and $E(G)$ to denote the vertex and edge sets of the graph $G$ respectively. We say that two vertices $u,v \in V(G)$ are adjacent if $\{u,v\} \in E(G)$, and we denote this by $u \sim v$. We refer to the vertices adjacent to a given vertex $v$ as the \emph{neighbors} of $v$. A \emph{clique} in a graph $G$ is set of pairwise adjacent vertices in $G$, and an \emph{independent set} is a set of pairwise non-adjacent vertices. The size of the largest clique (independent set) is known as the \emph{clique number} (\emph{independence number}) of $G$ and is denoted by $\omega(G)$ and $\alpha(G)$ respectively. The complement of the graph $G$, denoted $\overline{G}$, is the graph with vertex set $V(G)$ such that two vertices are adjacent in $\overline{G}$ if and only if they are \emph{not} adjacent in $G$. An \emph{apex} vertex of $G$ is a vertex that is adjacent to all other vertices of $G$.

\section{Quantum colorings and chromatic number}
\label{sec:Qchrom} 

%
There are two equivalent perspectives that one can take to define quantum coloring. Initially, quantum colorings were introduced as quantum strategies for a certain nonlocal game~\cite{Galliard02,Cleve04,Cameron07}. We give a brief explanation of this game, but refer the reader to~\cite{Cameron07} for further details. The overall idea of the graph coloring game is that two collaborating provers, Alice and Bob, are attempting to convince a verifier that a given graph admits a $c$-coloring. At the start of the game the verifier selects two vertices $u, v \in V(G)$ that are either equal or adjacent and sends one to Alice and the other to Bob. Without communicating, each of the provers must respond with one of the $c$ colors. Alice and Bob win the game if they have responded with distinct colors in the case $u$ and $v$ were adjacent and the same color in the case $u$ was equal to $v$. The graph $G$ and the number of colors $c$ is known to all the parties, and the provers can use this knowledge to agree on a strategy beforehand. A winning strategy is one which allows the provers to win with certainty no matter which two vertices the verifier selected. Just as in any nonlocal game, neither private nor shared randomness can increase the provers' chances of winning~\cite{Cleve04}. Therefore, classical strategies essentially correspond to a pair of functions $\alpha: V(G) \to [c]$ and $\beta: V(G) \to [c]$ which Alice and Bob use to map the received vertex to the color which they send back to the verifier. It is not hard to see that in order to win with certainty Alice and Bob must choose $\alpha=\beta$ which is also a valid $c$-coloring of $G$. Thus, the existence of a perfect classical strategy for this game is equivalent to the $c$-colorability of $G$. It then follows that the chromatic number $\chi(G)$ is the smallest $c\in\N$ for which there exists a perfect classical strategy for the corresponding coloring game. In analogy, a quantum $c$-coloring of $G$ is an entanglement-assisted strategy which allows the players to win this game with certainty and the quantum chromatic number is defined as
\be 
  \chi_q(G) := \min\set{c\in\N :\text{there exists a perfect quantum strategy for $c$-coloring the graph $G$}}.
\label{eq:Qchrom}
\ee
It has been shown \cite{Cameron07} that perfect quantum strategies for this graph coloring game can always be chosen to take a specific form. Specifically, the following three conditions hold
\begin{itemize}
\item it suffices for the provers to share a specific type of entangled state which is known as a maximally entangled state, 
\item the provers only need to perform projective measurements consisting of same-rank measurement operators, and
\item Alice's measurement operators are complex conjugate to those of Bob. 
\end{itemize}
This simplification allows us to reformulate the existence of a quantum $c$-coloring in a purely combinatorial manner and discuss quantum coloring and chromatic number without defining quantum strategies in their full generality.

\begin{definition}\label{def:qcol}
A \emph{quantum $c$-coloring} of a graph $G$ is a collection of $d$-dimensional orthogonal projectors $\big(v_i : v\in V(G), i\in [c] \big)$ where
\begin{align}
  \sum_{i\in[c]} v_i &= I_d \quad \text{for all vertices $v\in V(G)$ and } 
  && \text{(completeness)}\label{eq:Complete}\\
  v_i w_i &= 0\footnotemark \quad \text{for all $v \sim w$ and all $i\in[c]$.}  
  && \text{(orthogonality)} \label{eq:Adj}
\end{align}\footnotetext{Note that $v_iw_i = 0$ is equivalent to $\tr(v_iw_i) = 0$ since these are positive semidefinite operators.}
The \emph{quantum chromatic number} $\chi_q(G)$ is the smallest $c\in \N$ for which the graph $G$ admits a quantum $c$-coloring in some dimension $d>0$ (this is consistent with \Eq{Qchrom}).
\label{def:Qchrom}
\end{definition}

According to the above definition, any classical $c$-coloring can be viewed as a $1$-dimensional quantum coloring, where we set $v_i = 1$ if vertex $v$ has been assigned color $i$ and we set $v_i = 0$ otherwise. Therefore, quantum coloring is a relaxation of the classical one and for any graph $G$ we have $\chi(G) \ge \chi_q(G)$. The surprising bit is that quantum chromatic number can be strictly and even exponentially smaller than chromatic number for certain families of graphs~\cite{Buhrman98,Brassard99,Avis}. Since we are mostly interested in quantum $c$-colorings for $c<\chi(G)$, we refer to quantum $c$-colorings with $c\ge \chi(G)$ as trivial.

It is worth noting that all graphs which are quantum 2-colorable are also classically 2-colorable. Indeed, if $u$ and $v$ are adjacent vertices of $G$, then any quantum 2-coloring satisfies $u_1v_1 = u_2v_2 = 0$, $u_1+u_2 = I$, and $v_1+v_2 = I$. From this it follows that $u_1 = u_1(v_1+v_2) = u_1v_2 = (u_1+u_2)v_2 = v_2$. Therefore, if $G$ contains any odd cycle and $u$ is a vertex in such a cycle, then $u_1 = u_2$. This contradicts the fact that $u_1u_2 = 0$ and $u_1+u_2 = I$. So we obtain that any graph $G$ containing an odd cycle cannot be quantum 2-colored. Thus quantum $c$-colorings only become interesting when~$c \ge 3$.

\section{Constructions of quantum colorings}
\label{sec:Constructions}

With a couple of exceptions \cite{Fukawa11,Ji12}, all known nontrivial quantum $c$-colorings arise from real or complex orthogonal representations endowed with certain additional properties.

\begin{definition}
A (complex) \emph{orthogonal representation} of a graph $G$ is an assignment of complex unit vectors of some fixed dimension $d$ to vertices of the graph $G$ where adjacent vertices receive orthogonal vectors. The smallest dimension $d$ in which $G$ admits an orthogonal representation is known as the (complex) \emph{orthogonal rank} of $G$ and denoted as $\xi(G)$. We say that a $d$-dimensional orthogonal representation is \emph{flat} if the entries of all the assigned vectors have the same modulus.
\label{def:Orank}
\end{definition}

From now on we will omit the word ``complex'' when referring to complex orthogonal representations and orthogonal rank. It should be noted though that the value of orthogonal rank can change depending on the underlying field of the ambient vector space. For instance, the complex orthogonal rank can differ from the real orthogonal rank for some graphs $G$. 

We can view orthogonal representations as relaxations of colorings. Indeed, by identifying the $i$th color with the $i$th standard basis vector $\vec{e_i}$, we see that any $c$-coloring of $G$ corresponds to an orthogonal representation of $G$ where we have only used the standard basis vectors. Therefore, any $c$-colorable graph also admits a $c$-dimensional orthogonal representation and we arrive at the following fact.
\begin{fact}
For any graph $G$, we have $\xi(G) \le \chi(G)$.
\label{fact:XiChi}
\end{fact}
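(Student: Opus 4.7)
The statement is essentially already justified in the paragraph preceding it, so my plan is simply to formalize that sketch into a short, self-contained argument. The plan is to start with a proper classical coloring of $G$ using the minimum possible number of colors, and then turn it into an orthogonal representation of the same dimension by sending each color to a standard basis vector.

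First, I would let $c = \chi(G)$ and fix a proper $c$-coloring $f \colon V(G) \to [c]$. Let $\vec{e}_1, \dotsc, \vec{e}_c$ denote the standard basis vectors of $\C^c$ (each of which is a unit vector, and which are pairwise orthogonal). I would then define the assignment $\varphi \colon V(G) \to \C^c$ by $\varphi(v) := \vec{e}_{f(v)}$.

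Next I would verify that $\varphi$ is a $c$-dimensional orthogonal representation of $G$ in the sense of \Def{Orank}. Each $\varphi(v)$ is a unit vector since it is a standard basis vector. For the orthogonality condition, suppose $u \sim v$ in $G$. Since $f$ is a proper coloring, $f(u) \neq f(v)$, so $\vec{e}_{f(u)}$ and $\vec{e}_{f(v)}$ are distinct standard basis vectors and hence orthogonal; that is, $\varphi(u)\ct \varphi(v) = 0$. Thus $\varphi$ is a valid orthogonal representation of $G$ in dimension $c$, which gives $\xi(G) \le c = \chi(G)$ by the definition of $\xi$ as the minimum such dimension.

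There is no real obstacle here; the only things to be careful about are pinning down the precise definitions (unit vectors and the orthogonality of adjacent vertices, as in \Def{Orank}) and noting that the inequality $\xi(G) \le \chi(G)$ follows because we have exhibited \emph{some} orthogonal representation of dimension $\chi(G)$, not necessarily the optimal one.
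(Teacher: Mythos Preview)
Your proposal is correct and follows exactly the approach sketched in the paper: identify each color $i$ with the standard basis vector $\vec{e}_i$ to turn an optimal proper coloring into a $\chi(G)$-dimensional orthogonal representation, then invoke the minimality in the definition of $\xi$.
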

For some graphs the above inequality can be strict, \ie,  $\xi(G) < \chi(G)$ and in certain cases this implies that also $\chi_q(G) < \chi(G)$. There are two known constructions allowing to translate a $d$-dimensional orthogonal representation $\varphi: V(G) \to \C^d$ into a quantum $d$-coloring of $G$. The first construction applies if $\varphi$ is flat, while the second construction applies if $d =4,8$ and the assigned vectors are real. Whenever the appropriate conditions are met, these constructions can be used to translate separations of the form $\chi(G)>\xi(G)$ into separations of the form $\chi(G) > \chi_q(G)$. In fact, with the exception of the lesser known work \cite{Fukawa11}, all known explicit separations between quantum and classical chromatic numbers are obtained via one of these two constructions. 
The reason for this could be that orthogonal representations are easier to work with, as they involve fewer orthogonality relations than quantum colorings. We now proceed to describe the two constructions.

\paragraph{Construction with a flat orthogonal representation.} Let $\varphi : V(G) \to \C^d$ be a flat orthogonal representation and let $F$ be the $d$-dimensional Fourier matrix with entries $F_{kj} =\omega_d^{kj}/\sqrt{d}$, where  $\omega_d:=\exp(2\pi i /d)$.  For any vertex $v$ of $G$ we consider the matrix $U_v := \sqrt{d}  \diag(\varphi(v)) F$, where $\diag(\vec{x})$ is the diagonal matrix with the components of $\vec{x}$ as its diagonal entries. Since both $F$ and $\sqrt{d} \diag(\varphi(v))$ are unitary matrices, $U_v$ must also be a unitary matrix. Note that we needed the orthogonal representation to be flat to ensure that the matrices $\sqrt{d}  \diag(\varphi(v))$ are unitary.

To obtain a quantum $d$-coloring, we set $v_i$ to be the projection onto the vector 
\begin{equation*}
  \vec{r_{vi}}:=\sqrt{d}  \diag(\varphi(v)) F \vec{e_i}
\end{equation*}
which is just the $i$th column of the unitary matrix $U_v$. We will make use of the fact that
\begin{equation*}
  \vec{r_{vi}}= \sqrt{d} \,  \varphi(v) \circ \vec{f_i},
\end{equation*}
where ``$\circ$'' denotes the entry-wise product and $\vec{f_i}:= F \vec{e_i}$. To verify that the projectors $v_i$ give a valid quantum $d$-coloring, we need to check the completeness and orthogonality conditions,~\EqRef{Complete} and \EqRef{Adj}, from \Def{Qchrom}. The completeness condition $\sum_i v_i = I$, follows directly from the fact that the columns $\vec{r_{vi}}$ of the unitary $U_v$ form an orthonormal basis. To check the orthogonality condition, we need to show that $\tr(v_iw_i) =0 $ for all $i\in[d]$ and all adjacent vertices $v$ and $w$. Since $\tr(v_i w_i) = \abs[\big]{\vec{r_{vi}}\ct \vec{r_{wi}}}^2$, it suffices to check that the vectors $\vec{r_{vi}}$ and $\vec{r_{wi}}$ are orthogonal:
\begin{align*}
 \vec{r_{vi}}\ct \vec{r_{wi}} &=  d \,  \big( \varphi(v) \circ \vec{f_i} \big)\ct \big( \varphi(w) \circ \vec{f_i}\big) \\
  &= d\, \vec{1}\ct \Big( \varphi(v)^* \circ \vec{f_i}^* \circ  \varphi(w) \circ \vec{f_i}\Big) \\
   & = d\, \vec{1}\ct \Big( \varphi(v)^* \circ  \varphi(w)\Big)   \\ 
  &= d\, \varphi(v)\ct \varphi(w)   = 0,
\end{align*}
where $\vec{1}$ is the all ones vector, $(\cdot)^*$ denotes entry-wise complex conjugation and we have used the facts that  $\vec{s}\ct \vec{t} = \vec{1}\ct \big( \vec{s}^* \circ \vec{t} \big)$,  $\vec{f_i}^* \circ \vec{f_i} = \vec{1}$, and that $\varphi$ assigns orthogonal vectors to adjacent vertices. 

It is hard to trace the origins of this somewhat folklore construction. It is described in full generality in \cite{Cameron07} but similar a construction had already been used in \cite{Buhrman98,Brassard99}.

\paragraph{Construction with real orthogonal representations in dimension less than eight.}

The authors of~\cite{Cameron07} describe a beautiful construction for converting a \emph{real} orthogonal representation in dimension four into a quantum 4-coloring, and similarly for dimension eight. The constructions use quaternions and octonions respectively. We briefly describe the construction for dimension four, but dimension eight is similar.

Suppose $\vec{r} = (r_0, r_1, r_2, r_3)\tp \in \mathbb{R}^4$ is a unit vector. We aim to use this vector to construct a full orthonormal basis of $\mathbb{R}^4$. To do this, we associate to any vector $\vec{r} \in \mathbb{R}^4$, the \emph{quaternion}
\[q(\vec{r}) = r_0 g_0 + r_1 g_1 + r_2 g_2 +r_3 g_3,\]
where $g_0 = 1, g_1, g_2, g_3$ are the fundamental quaternion units. The usual notation is $1,i,j,k$, but our notation makes it easier to refer to an arbitrary unit. Recall that $g_i^2 = -1$ for $i = 1,2,3$, and that for all \emph{distinct} $i,j \in \{1,2,3\}$ we have $g_ig_j = \pm g_k$ for $k \in \{1,2,3\}\setminus\{i,j\}$. Note that similar rules also hold for the octonions.

Given a vector $\vec{r} \in \mathbb{R}^4$, for any $i \in \{0,1,2,3\}$, let $\vec{r^i} \in \mathbb{R}^4$ be the vector such that $q(\vec{r^i}) = g_iq(\vec{r})$. For instance, $\vec{r^0} = \vec{r}$. We aim to show that $\{\vec{r^0}, \vec{r^1}, \vec{r^2}, \vec{r^3}\}$ is an orthonormal basis. Since multiplying by $g_i$ does not change the magnitude of any coefficient, it is easy to see that $\vec{r^i}$ is a unit vector for all~$i$. Also, left-multiplying $q(\vec{r})$ by $g_i$ swaps the coefficients of $g_0$ and $g_i$ and makes one negative, thus these two coordinates contribute zero to the inner product of $\vec{r}$ and $\vec{r^i}$. The same is true for the remaining two coordinates since $g_ig_j = \pm g_k$ and $g_ig_k = \mp g_j$. Thus $\vec{r}$ and $\vec{r^i}$ are orthogonal, and by similar reasoning all four vectors are pairwise orthogonal and thus form an orthonormal basis. Essentially the same argument works for dimension eight using octonions, except there are four pairs of coordinates that are swapped. In the end, the four vectors obtained from $\vec{r} = (r_0, r_1, r_2, r_3)\tp$ are the columns of the following matrix:
\[
\begin{pmatrix}
r_0 & -r_1 & -r_2 & -r_3 \\
r_1 &  r_0 & r_3 & -r_2 \\
r_2 & -r_3 & r_0 & r_1\\
r_3 & r_2 & -r_1 & r_0
\end{pmatrix}
\]

Now suppose we have a graph $G$ which has a real orthogonal representation $\varphi$ in dimension four. For any vertex $v \in V(G)$, we can construct the orthonormal basis $\{\varphi(v)^0, \varphi(v)^1, \varphi(v)^2, \varphi(v)^3\}$ as above. To obtain a quantum 4-coloring, we let let $v_i$ be the projection onto vector $\varphi(v)^i$. The completeness condition from \Def{Qchrom} is satisfied since the vectors $\varphi(v)^i$ form an orthonormal basis. It remains to show that $u_i$ and $v_i$ are orthogonal whenever $u \sim v$. This corresponds to the vectors $\varphi(u)^i$ and $\varphi(v)^i$ being orthogonal, which holds since these were obtained by permuting and changing some of the signs of the coordinates of $\varphi(u)$ and $\varphi(v)$ respectively.

The same construction works for an eight dimensional real orthogonal representation and so we have the following:

\begin{lemma}{\cite{Cameron07}}\label{lem:realortho2qc}
If a graph $G$ has a real orthogonal representation in dimension four, then $\chi_q(G) \le 4$. Similarly, if $G$ has a real orthogonal representation in dimension eight, then $\chi_q(G) \le 8$.
\end{lemma}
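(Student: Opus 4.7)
The plan is to carry out the construction already sketched in the paragraphs preceding the statement and then verify that it produces a valid quantum coloring in the sense of \Def{qcol}. Given a real orthogonal representation $\varphi : V(G) \to \mathbb{R}^4$, for each $v \in V(G)$ and $i \in \{0,1,2,3\}$ I define the vector $\varphi(v)^i \in \mathbb{R}^4$ by the requirement $q(\varphi(v)^i) = g_i\, q(\varphi(v))$, where $q : \mathbb{R}^4 \to \mathbb{H}$ is the standard identification with the quaternions. Then I set $v_i$ to be the rank-one orthogonal projector onto $\varphi(v)^i$.

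There are two things to check. The first is completeness: for each fixed $v$, the family $\{\varphi(v)^0,\varphi(v)^1,\varphi(v)^2,\varphi(v)^3\}$ is an orthonormal basis of $\mathbb{R}^4$, so $\sum_i v_i = I_4$. This is just the statement that the displayed $4 \times 4$ matrix with these vectors as columns is orthogonal, which can be read off directly from the quaternion multiplication table; alternatively, one uses the abstract fact that left multiplication by a unit quaternion is an $\mathbb{R}$-linear isometry of $\mathbb{H} \cong \mathbb{R}^4$, together with the pairwise orthogonality of $\{g_0,g_1,g_2,g_3\}$ applied to $q(\varphi(v))$.

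The second and only nontrivial step is orthogonality: for $u \sim v$ and every $i$, one needs $u_i v_i = 0$, i.e.\ $\varphi(u)^i \perp \varphi(v)^i$. The cleanest way is to note that the map $L_{g_i} : q \mapsto g_i q$ is $\mathbb{R}$-linear and, since $|g_i|=1$ and the quaternionic norm is multiplicative, satisfies $|L_{g_i}(q)| = |q|$; hence $L_{g_i}$ is an isometry and preserves the real inner product. Consequently
\[
\langle \varphi(u)^i, \varphi(v)^i \rangle \;=\; \langle L_{g_i}\varphi(u), L_{g_i}\varphi(v) \rangle \;=\; \langle \varphi(u), \varphi(v) \rangle \;=\; 0,
\]
where the last equality is the defining property of the orthogonal representation $\varphi$ on the edge $\{u,v\}$. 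This completes the argument that the projectors $(v_i)$ form a quantum $4$-coloring, giving $\chi_q(G) \le 4$.

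For the dimension-eight case I repeat the construction verbatim, replacing quaternions by octonions and the units $g_0,\ldots,g_3$ by the eight standard octonion units $g_0,\ldots,g_7$. The only potential worry is the nonassociativity of $\mathbb{O}$, but the proof never multiplies three elements: it only uses that each $g_i$ is a unit octonion and that left multiplication by a unit octonion is an $\mathbb{R}$-linear isometry of $\mathbb{R}^8$, both of which follow from the composition-algebra identity $|xy|=|x||y|$ on $\mathbb{O}$. The main obstacle is purely bookkeeping — writing down the $8 \times 8$ analogue of the displayed matrix and checking signs — but the abstract argument via isometry bypasses this and gives $\chi_q(G) \le 8$ whenever $G$ has a real orthogonal representation in dimension eight.
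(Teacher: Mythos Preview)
Your proof is correct and follows the same quaternion/octonion construction as the paper; the only difference is that for the edge condition you argue abstractly that $L_{g_i}$ is an isometry via the composition identity $|xy|=|x|\,|y|$, whereas the paper observes concretely that each $L_{g_i}$ is a signed coordinate permutation. One small wording slip in your alternative completeness argument: to deduce that $\{g_i\, r\}_i$ is orthonormal from the orthonormality of $\{g_i\}_i$, the relevant isometry is \emph{right} multiplication by the unit element $r = q(\varphi(v))$ rather than left multiplication by the $g_i$, but your primary argument via the displayed matrix already covers this.
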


Using the above, one can sometimes use separations between orthogonal rank and chromatic number to construct separations between quantum chromatic number and chromatic number. To see how, suppose $G$ is a graph with a real orthogonal representation in dimension $d$, such that $d < \chi(G)$ and $d \le 8$. By successively adding apex vertices to $G$, we can obtain a graph with a real orthogonal representation in dimension 8 but with chromatic number strictly greater than 8. We can then use the above construction to obtain a quantum 8-coloring of this graph. The same technique works if the orthogonal representation was in dimension at most four. Therefore, we have the following:

\begin{fact}
Suppose $G$ is a graph with a real orthogonal representation in dimension $d \le 4$ and $d < \chi(G)$. Then by successively adding apex vertices to $G$, we can obtain a graph $G'$ such that $\chi_q(G') \le 4 < \chi(G')$. The analogous statement holds for $d \le 8$.
\end{fact}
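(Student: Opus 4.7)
The plan is to verify that adding a single apex vertex bumps up both the chromatic number and the dimension of a real orthogonal representation by exactly one; then we can iterate this operation $4-d$ (or $8-d$) times until the orthogonal representation lives in a dimension where \Lem{realortho2qc} applies, while keeping the chromatic number strictly larger than this target dimension.

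First, I would establish the following single-step claim. Suppose $H$ has a real orthogonal representation $\varphi \colon V(H) \to \mathbb{R}^k$, and let $H^+$ be the graph obtained from $H$ by adjoining an apex vertex $v^*$. Then the assignment $\widetilde{\varphi}(w) = (\varphi(w), 0) \in \mathbb{R}^{k+1}$ for $w \in V(H)$ together with $\widetilde{\varphi}(v^*) = \vec{e}_{k+1}$ is a real orthogonal representation of $H^+$ in dimension $k+1$, because $\vec{e}_{k+1}$ is orthogonal to every vector of the form $(\varphi(w),0)$. Moreover, since any proper coloring of $H^+$ must give $v^*$ a color distinct from all colors used on $V(H)$, we have $\chi(H^+) = \chi(H)+1$.

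Now iterate. Starting from $G$ with real orthogonal representation in dimension $d$ and $d < \chi(G)$, add $4-d$ apex vertices in succession (if $d \le 4$), producing a graph $G'$. By the single-step claim applied $4-d$ times, $G'$ has a real orthogonal representation in dimension $d + (4-d) = 4$ and
\begin{equation*}
\chi(G') = \chi(G) + (4-d) > d + (4-d) = 4.
\end{equation*}
Applying \Lem{realortho2qc} to the $4$-dimensional real orthogonal representation of $G'$ yields $\chi_q(G') \le 4 < \chi(G')$, as desired. The analogous argument in the case $d \le 8$ uses $8-d$ apex vertices and the dimension-eight part of \Lem{realortho2qc}.

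There is no real obstacle here: the whole content of the fact is the observation that the apex construction raises chromatic number and the dimension of an orthogonal representation in lockstep, so one can trade excess in the chromatic number for room to reach dimension $4$ or $8$ without losing the strict inequality. The only thing to be mildly careful about is the base case $d = 4$ (resp.\ $d = 8$), where no apex vertices are added and the conclusion is immediate from \Lem{realortho2qc} together with the hypothesis $d < \chi(G)$.
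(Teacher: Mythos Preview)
Your proposal is correct and follows exactly the approach sketched in the paragraph preceding the Fact in the paper: add $4-d$ (resp.\ $8-d$) apex vertices, observe that each such addition increases both $\chi$ and the available real orthogonal representation dimension by one, and then invoke \Lem{realortho2qc}. You have simply written out in full detail what the paper leaves as a one-sentence remark.
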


At this point, it is natural to wonder whether a $d$-dimensional orthogonal representation can always be extended to a quantum $d$-coloring. More generally, one could ask how the quantum chromatic number and orthogonal rank compare. In the next two sections we will show that the two parameters are in fact not comparable.

\section{The curious case of the thirteen vertex graph}
\label{sec:G13}

In this section we define a graph on thirteen vertices, denoted $G_{13}$, which will let us exhibit the promised oddities of quantum colorings. After defining $G_{13}$, we discuss classical graph parameters of $G_{13}$ in \Sec{Classical}. Next we proceed to prove that $\chi_q(G_{13}) = 4$ in \Sec{QChromG13} and end by discussing the unexpected behaviors exhibited by this graph in \Sec{oddities}.

To define $G_{13}$, we consider the nonzero three-dimensional vectors with entries from the set $\set{-1,0,1}$. We identify the vectors $v$ and $-v$, choosing the following set of thirteen representatives:
\begin{align}
 V:= & \ \set[\bigg]{\smx{ 1\\ 0\\ 0},\smx{0\\ 1\\ 0},\smx{0\\ 0\\ 1} } \nonumber\\
 & \ \cup \set[\bigg]{\smx{1\\ 1\\ 0},\smx{1\\ -1\\ 0},\smx{1\\ 0\\ 1},\smx{1\\ 0\\ -1},\smx{0\\ 1\\ 1},\smx{0\\ 1\\ -1}} \label{eq:Vecs}\\
 & \ \cup \set[\bigg]{\smx{1\\ 1\\ 1},\smx{1\\ 1\\ -1},\smx{1\\ -1\\ 1},\smx{-1\\ 1\\ 1}}. \nonumber
\end{align}
Geometrically these vectors can be seen as arising from a three-dimensional cube which is centered at the origin and whose edges have length two. From this viewpoint, the first three vectors correspond to the midpoints of the faces, the next six vectors to midpoints of the edges, and the last four correspond to the vertices of this cube. Using $V$ as the vertex set, we construct $G_{13}$ by making any two orthogonal vertices adjacent. That is, we let $E(G_{13}):=\set{\set{u,v} \subset \binom{V}{2}:  u\tp v=0}$. See \Fig{G13} for a drawing of $G_{13}$, and note that the ten middle vertices form a Petersen graph. 
In \Fig{G13}, we have labeled the vertices by (capital) letters of the alphabet. We will interchangeably refer to the vertices of $G_{13}$ using vectors from the set $V$ defined in \Eq{Vecs} and these letters; the correspondence between these two labellings is explained in the caption of \Fig{G13}.

\begin{figure}
\centering
\includegraphics{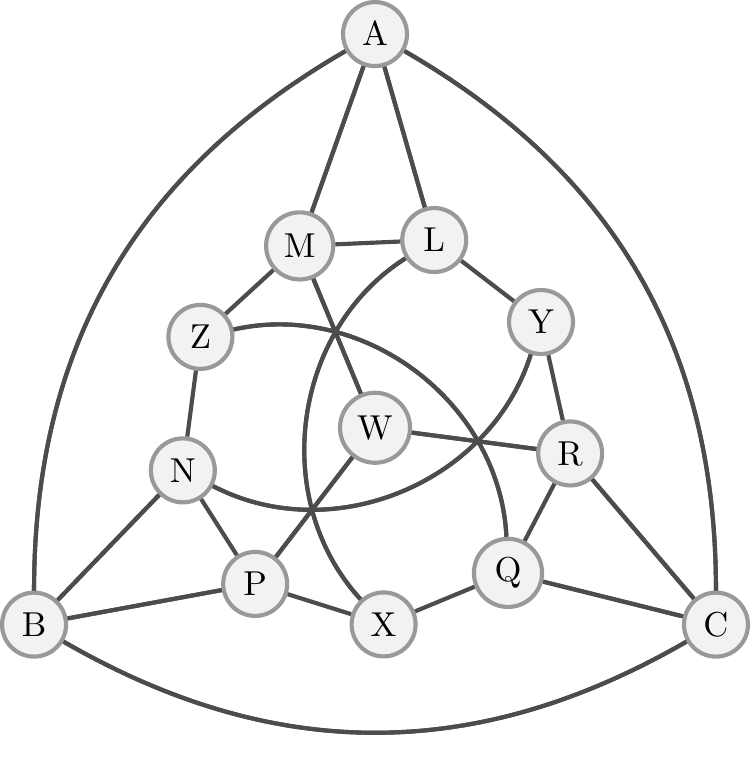}

\caption{The thirteen vertex graph $G_{13}$. The vertices $A,B,C$ correspond to the first three standard basis vectors $\vec{e_1},\vec{e_2},\vec{e_3}$ in that order. The vertices $X,Y,Z,$ and $W$ correspond to vectors from $V$ with no zero entries; the vertex $W$ corresponds to the all ones vector. The remaining six vertices, labelled by letters from the middle of the alphabet, correspond to vectors from $V$ with exactly one nonzero entry.}
\label{fig:G13}
\end{figure}

Another graph that will be useful to us is obtained by adding an apex vertex to $G_{13}$. We refer to this graph as $G_{14}$. We label the vertices of $G_{14}$ that are contained in $G_{13}$ as they are labelled above, and we refer to the apex vertex as $\Omega$. Building on the three-dimensional orthogonal representation of $G_{13}$ given by the vectors in $V$, we can easily construct a four-dimensional orthogonal representation for $G_{14}$. Specifically, to vectors in $V$ we add an additional coordinate which is zero and assign $(0,0,0,1)\tp$ to vertex~$\Omega$.

\subsection{Some classical graph parameters of \texorpdfstring{$G_{13}$}{G13}}
\label{sec:Classical}

We now give a list of some classic graph parameters for the graph $G_{13}$ and its complement $\overline{G_{13}}$. We then proceed with arguments that can be used to obtain the given values. For determining the parameters for $G_{13}$ it will often be useful to consider the corresponding parameter for the Petersen graph, $P$, which can be found in any introductory graph theory textbook.

\begin{center}
\begin{tabular}{l|c|c}
Graph parameter &  \hspace{0.2in} $G_{13}$ \hspace{0.2in} & The Petersen graph $P$\\
\hline
Independence number, $\alpha$ & 5 & 4\\
Chromatic number, $\chi$ & 4 & 3\\
Clique number, $\omega$ & 3 & 2\\
Chromatic number of the complement, $\overline{\chi}$ & 6 & 5\\
Lov\'{a}sz theta number, $\vartheta$ & 5 & 4\\
Lov\'{a}sz theta number of the complement, $\overline{\vartheta}$ & 3 & $5/2$\\
\end{tabular}
\end{center}

\paragraph{Independence number and Lov\'{a}sz theta number.} First, observe that $\alpha(P)=4$. Since the additional vertices in $G_{13}$ form a triangle, we have that $\alpha(G_{13})\le 5$. Finally, since $\set{A,W,X,Y,Z}$ is an independent set we obtain that $\alpha(G_{13})=5$. In fact there are only three different independent sets of size five and they take the form $\set{W,X,Y,Z,v}$, where $v\in\set{A,B,C}$. 

Since $\alpha(G) \le \vartheta(G)$ for any graph, the above implies that $\vartheta(G_{13}) \ge 5$. Next, it is known that $\vartheta(P) =4$ and $\vartheta(K_3) = 1$, so we can upper-bound $\vartheta$ using a similar argument as in the case of $\alpha$. 
Indeed, we get that $\vartheta(G_{13}) \le 5$, by using the following two well-known properties of $\vartheta$. Firstly, $\vartheta$ of the disjoint union of graphs is equal to the sum of $\vartheta$ of the components and secondly, $\vartheta$ cannot increase by adding edges.

\paragraph{Chromatic number.} We start by noting that the four independent sets $\set{A,W,X,Y,Z},\set{B,L,R}$, $\set{C,M,N}$, and $\set{P,Q}$ partition the vertices of $G_{13}$. Therefore, $G_{13}$ is 4-colorable. So to establish that $\chi(G_{13}) = 4$, it remains to argue that it cannot be 3-colored. For contradiction, assume that a valid 3-coloring exists. Without loss of generality assume that we color the vertices $A,B,$ and $C$ with colors 1, 2, and 3 respectively. 
Let $u\in\set{L,M}$ be the vertex colored with color 3, let $v\in\set{N,P}$ be the vertex colored with color 1, and let $w\in\set{Q,R}$ be the vertex colored with color 2. Now note that all three vertices $\overline{u}\in\set{L,M}\setminus\set{u}$, $\overline{v}\in\set{N,P}\setminus\set{v}$, and $\overline{w}\in\set{Q,R}\setminus\set{w}$ also have received different colors. To complete the argument it remains to note that either all three vertices $u,v$ and $w$ or all three vertices $\overline{u},\overline{v}$, and $\overline{w})$ are adjacent to a vertex $z\in\set{W,X,Y,Z}$. Therefore, it is not possible to color $z$ with any of the three colors and we have reached a contradiction.

\paragraph{Clique number and the Lov\'{a}sz theta of the complement.} By inspection, we see that the clique number $\omega(G_{13})=3$. This implies that $\overline{\vartheta}(G_{13}) \ge 3$. On the other hand it is known that $\overline{\vartheta}(G) \le \xi(G)$ for any graph $G$, and so $\overline{\vartheta}(G_{13}) = 3$ as, by construction, $G$ has a three-dimensional orthogonal representation.

\paragraph{Chromatic number of the complement.} The Petersen graph has no triangles and therefore its complement has independence number 2. This implies that the chromatic number of the complement of the Petersen graph is at least 10/2 = 5 and that the color classes in any 5-coloring form a perfect matching in the Petersen graph. Since the Petersen graph has a perfect matching, its complement can indeed by 5-colored. On the other hand, to 5-color the complement of $G_{13}$, we must 5-color the complement of the Petersen graph so that the neighborhood of $A$ in $\overline{G_{13}}$ does not contain vertices of all five different colors, and similarly for $B$ and $C$. However, the only way for this to happen is if each of $\set{L,M}, \set{N,P},$ and $\set{Q,R}$ are a color class of the 5-coloring. In other words, there must be a perfect matching of the Petersen graph containing those three edges. However, this is impossible since the remaining vertices form an independent set. Therefore $\chi(\overline{G_{13}}) \ge 6$. We can 6-color $\overline{G_{13}}$ by first 5-coloring the complement of the Petersen graph and then using an additional color for vertices $A,B,$ and $C$. Therefore $\chi(\overline{G_{13}}) = 6$.

\paragraph{Automorphism group.}
It is easy to see that consistently permuting the coordinates of vectors from $V$ yields automorphisms of $G_{13}$, and there are $6$ possible permutations. When we permute coordinates, we will sometimes need to multiply a vector by $-1$ to recover one of our original thirteen representatives. For instance, if we switch the first and second coordinates, the vector $(1,-1,0)\tp$ becomes $(-1,1,0)\tp$ which is not among the vectors in $V$. We can also multiply any of the coordinates of all the vectors by $\pm1$ to obtain an automorphism. Again, some vectors will have to additionally be multiplied by $-1$ so that we are always using the vectors from the set $V$. Since multiplying all three coordinates by $\pm 1$ gives the identity automorphism, this gives us a total of $6\times (2\times 2 \times 2)/2 = 24$ automorphisms. It turns out that this is the total size of the automorphism group $\mathrm{Aut}(G_{13})$. To see that there are no more automorphisms, first note that vertices $A,B,$ and $C$ are the only ones contained in two triangles, and thus any automorphism fixes these three vertices set-wise. Also, using the automorphisms described above we can permute these three vertices in any way we like (by permuting coordinates).

Now suppose that $\sigma$ is an automorphism of $G_{13}$. We will show that $\sigma$ must be one of the automorphisms we already know. By the above there exists an automorphism $\pi$ which corresponds to permuting the vector coordinates such that $\pi \circ \sigma$ pointwise fixes $A,B,$ and $C$. It follows that $\{N,P\} = \{(1,0,1)\tp,(1,0,-1)\tp \}$ and $\{Q,R\} = \{(1,1,0)\tp,(1,-1,0)\tp\}$ are both set-wise fixed by $\pi \circ \sigma$. By multiplying the second and the third coordinates by $\pm 1$ independently, we can permute $N$ and $P$ arbitrarily as well as $Q$ and $R$. Moreover, these automorphisms fix each of $A,B,$ and $C$. Thus, from our known set of automorphisms, we can select an automorphism $\gamma$ such that $\gamma \circ \pi \circ \sigma$ fixes each of $A,B,C,N,P,Q,$ and $R$. It is easy to see that each of $W,X,Y$ and $Z$ are adjacent to exactly two vertices among $\{N,P,Q,R\}$, and that this pair of vertices is different for each of $W,X,Y$ and $Z$. Therefore, $W,X,Y$ and $Z$ must be fixed by $\gamma \circ \pi \circ \sigma$. From here it easily follows that the remaining two vertices, $L$ and $M$, are fixed as well. Thus $\gamma \circ \pi \circ \sigma$ is the identity automorphism and so we have that $\sigma = (\gamma \circ \pi)^{-1}$ which is one of the automorphisms we already knew about.

There is another fact about the automorphisms of $G_{13}$ that we will need for the proof of \Thm{notq3col}. Note that by permuting the coordinates, we can fix the vertex $W$ and permute the vertices $X,Y,Z$ in any way. Since we can also map any of $X,Y,$ and $Z$ to $W$ by multiplying the appropriate coordinate by $-1$, we have the following fact:

\begin{fact}\label{fact:autos}
Any ordered pair of distinct vertices of $\{W,X,Y,Z\}$ can be mapped to any other such pair by an automorphism of $G_{13}$.
\end{fact}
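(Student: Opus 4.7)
The plan is to deduce Fact~\ref{fact:autos} from 2-transitivity of $\mathrm{Aut}(G_{13})$ on the four-element set $\{W,X,Y,Z\}$, since the statement that any ordered pair of distinct elements can be sent to any other is precisely 2-transitivity. By a standard orbit--stabilizer argument, it suffices to verify two things: that the action on $\{W,X,Y,Z\}$ is transitive, and that the stabilizer of one chosen vertex (say $W$) acts transitively on the remaining three.

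First I would analyze the stabilizer of $W$. Recall from the caption of \Fig{G13} that $W$ is the all-ones vector and that $X,Y,Z$ are the remaining vectors in $V$ with no zero entry, each determined by the coordinate position carrying its unique $-1$. Every coordinate permutation is an automorphism of $G_{13}$, fixes $W$ pointwise, and realizes the corresponding permutation of $\{X,Y,Z\}$ via its action on the position of the $-1$. Since all six coordinate permutations are available, the induced action of the stabilizer of $W$ on $\{X,Y,Z\}$ is the full symmetric group $S_3$.

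Next I would verify transitivity on $\{W,X,Y,Z\}$ by exhibiting, for each $U\in\{X,Y,Z\}$, an automorphism that swaps $U$ with $W$: namely, negate the single coordinate in which $U$ has its $-1$ entry. This is one of the sign-flip automorphisms described in the paragraph preceding the fact (with some images possibly multiplied by $-1$ to return to representatives in $V$, which is harmless since vertices are equivalence classes under $v\sim -v$). By construction this automorphism sends $U$ to $(1,1,1)\tp=W$ and $W$ to $U$. Combining this with the transitive stabilizer action from the previous step yields 2-transitivity, which is precisely the claim.

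There is no real obstacle: the argument simply packages the two observations made in the paragraph immediately preceding the fact, together with the elementary group-theoretic remark that a transitive group action whose point stabilizer is transitive on the remaining points is automatically 2-transitive. The only bookkeeping is keeping track of the identification $v \sim -v$ when applying a sign flip, so that each sign change is read as a well-defined automorphism of $G_{13}$.
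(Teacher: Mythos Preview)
Your proposal is correct and follows essentially the same route as the paper: the paragraph preceding the fact already observes that coordinate permutations fix $W$ while permuting $X,Y,Z$ arbitrarily, and that a suitable sign flip sends any of $X,Y,Z$ to $W$. You have simply made explicit the orbit--stabilizer packaging of these two observations into 2-transitivity, which the paper leaves implicit.
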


\subsection{The graph \texorpdfstring{$G_{13}$}{G13} is not quantum 3-colorable}
\label{sec:QChromG13}

The fact that $G_{13}$ is not quantum 3-colorable was first observed by Burgdorf and Piovesan~\cite{perscom} using the GAP system for algebraic computations. In particular, they used the GAP package GBNP made for computing Groebner bases of ideals of non-commutative polynomials. To apply this to quantum coloring, we encode the conditions from \Def{qcol} in a set of polynomials~$\mathcal{P}$. For example, the completeness condition in \Eq{Complete} can be encoded as the polynomial $\sum_i v_i - 1$. In addition, since the operators $v_i$ in  \Def{qcol} need to be projectors, we can also add the polynomials of the form $v_i^2-v_i$ to our set $\mathcal{P}$. 
Next, we consider the ideal $\mathcal{I}$ generated by $\mathcal{P}$, where we treat the $v_i$ as non-commutative variables.  If $\mathcal{I}$ contains the identity element (constant polynomial 1), then we can express the identity as a sum of polynomials $q_k = l_k \cdot p_k \cdot r_k$, where $p_k\in\mathcal{P}$. Since a quantum coloring is an assignment of orthogonal projections to the variables $v_i$ which makes all polynomials $p\in\mathcal{P}$ to evaluate to zero, $\mathcal{I}$ containing identity implies that $I=0$, which is a contradiction. So if the ideal $\mathcal{I}$ contains the identity element, then the graph cannot be quantum colored with the specified number of colors. Given a set of polynomials $\mathcal{P}$, we can use the GBNP package to find a Groebner basis of the ideal $\mathcal{I}$ generated by $\mathcal{P}$. If the identity element belongs to $\mathcal{I}$, then this basis would only contain identity.

In principle the GBNP package can also provide an expression for each of the returned Groebner basis elements in terms of the original polynomials from $\mathcal{P}$. Thus, if the identity element is in the ideal, the package will return an expression for the identity element in terms of the original polynomials, which in principle constitutes an explicit proof. However, in practice this can only be done for small examples. For instance, we were not able to do this even for quantum 3-coloring of a 5-cycle. Even if we used more computing power, it is likely that the expression would be too long for any practical use. For example, even though we can easily establish the impossibility of quantum 3-coloring $K_4$, the GBNP package returns an expression with over 7000 monomials. 

It is also worth noting that the identity element never seems to be in the ideal when considering quantum $c$-coloring for $c \ge 4$, even for large complete graphs. This does not mean that every graph is quantum 4-colorable, since the fact that the identity element is not in the ideal only implies that there exists a solution for the $v_i$ in some primitive algebra, whereas we require a solution in $\C^{d \times d}$ for some $d \in \mathbb{N}$.

Our main contribution in this section is an explicit proof that $G_{13}$ is not quantum 3-colorable. In order to do this we will need some lemmas, the first of which was proven by Ji~\cite{Ji12}, and says that the operators assigned to adjacent vertices in a quantum 3-coloring always commute. We provide a slightly shorter proof of our own.

\begin{lemma}\label{lem:commute}
Suppose $G$ is a graph with adjacent vertices $u$ and $v$. Then in any quantum 3-coloring of $G$, we have that $u_i v_j = v_j u_i$ for all $i,j \in [3]$.
\end{lemma}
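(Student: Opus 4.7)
The plan is to reduce the commutator $[u_i, v_j]$ to zero by first establishing a single three-index identity: $u_i v_j u_k = 0$ whenever $\{i,j,k\} = \{1,2,3\}$. The number of colors $c=3$ enters this step crucially, because completeness then allows me to replace $v_j$ with $I - v_i - v_k$, exposing only two ``other'' color projectors, each of which can be killed using adjacency. Once the identity is in hand, I will extract commutativity by decomposing $u_i v_j$ using completeness on the $u$ side.

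Concretely, to prove the key identity I would expand
\[
u_i v_j u_k = u_i(I - v_i - v_k) u_k = u_i u_k - u_i v_i u_k - u_i v_k u_k.
\]
The middle term vanishes because $u_i v_i = 0$ by adjacency, and the last vanishes because $v_k u_k = (u_k v_k)\ct = 0$. The first vanishes because $u_1, u_2, u_3$ form a PVM and so are pairwise orthogonal; I would justify $u_i u_k = 0$ for $i \ne k$ by the short standard argument that $u_i u_j + u_i u_k = 0$ follows from $u_i = u_i(u_1+u_2+u_3)$, combined with the positivity of $u_i u_j u_i = (u_j u_i)\ct(u_j u_i)$ and the trace trick mentioned in the footnote of \Def{qcol}.

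Given this identity, commutativity follows in a few lines. If $i=j$, then $u_i v_i = v_i u_i = 0$ by adjacency. If $i \ne j$, let $k$ be the third index and decompose using completeness on the right:
\[
u_i v_j = u_i v_j (u_1 + u_2 + u_3) = u_i v_j u_i + u_i v_j u_j + u_i v_j u_k.
\]
The second term is zero because $v_j u_j = 0$, and the third is zero by the key identity. Hence $u_i v_j = u_i v_j u_i$, which is Hermitian since both $u_i$ and $v_j$ are, and therefore equals its own adjoint $v_j u_i$.

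I do not expect any serious obstacle; the only mildly delicate point is the PVM orthogonality $u_i u_j = 0$, but this is a routine consequence of the projector and completeness conditions. The essential reason the argument works is that $c=3$ is exactly small enough that expanding $v_j$ via completeness leaves only two ``bad'' terms, both killed by adjacency, and that decomposing $u_i v_j$ on the $u$ side leaves only the index-$i$ piece alongside pieces already eliminated by $v_j u_j = 0$ and the key identity.
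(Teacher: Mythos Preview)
Your proof is correct and follows essentially the same route as the paper: both arguments insert $\sum_k u_k = I$ to obtain $u_i v_j = u_i v_j u_i$ after killing the $u_i v_j u_j$ term via adjacency and the $u_i v_j u_k$ term via $v_j = I - v_i - v_k$ and PVM orthogonality, and then finish by noting that $u_i v_j u_i$ is self-adjoint. The only cosmetic difference is that you isolate the identity $u_i v_j u_k = 0$ up front and spell out why $u_i u_k = 0$, whereas the paper derives the same identity inline and takes PVM orthogonality for granted.
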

\begin{proof}
For $i = j$ the claim follows trivially from the requirement that $u_iv_i = 0 = v_iu_i$ for any quantum 3-coloring of $G$. For the case $i \ne j$, assume without loss of generality that $i = 1$ and $j = 2$. We have that
\[u_1v_2 = u_1v_2(u_1 + u_2 + u_3) = u_1v_2u_1 + u_1v_2u_3,\]
and
\[u_1v_2u_3 = u_1(I - v_1 - v_3)u_3 = u_1u_3 = 0.\]
Therefore, $u_1v_2 = u_1v_2u_1$ and taking conjugate transpose shows that $v_2u_1 = u_1v_2u_1$.
\end{proof}

We also need the following which is proved for $c = 3$ in~\cite{Ji12}.

\begin{lemma}\label{lem:clique}
Let $G$ be a graph and suppose $S$ is a clique of size $c$ in $G$. Then in any quantum $c$-coloring of $G$,
\[\sum_{v \in S} v_i = I \text{ for all } i \in [c].\]
\end{lemma}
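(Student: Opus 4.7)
The plan is to fix a color $i \in [c]$ and study the operator $P_i := \sum_{v \in S} v_i$. The key structural observation is that, since $S$ is a clique, every pair of distinct vertices $v,w \in S$ is adjacent, so the orthogonality condition \EqRef{Adj} gives $v_i w_i = 0$. This means the projectors $\{v_i : v \in S\}$ are pairwise orthogonal, and therefore their sum $P_i$ is itself an orthogonal projector, in particular $0 \preceq P_i \preceq I$.

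Next I would swap the order of summation to compute the total $\sum_{i \in [c]} P_i$. Applying the completeness condition \EqRef{Complete} to each vertex $v \in S$ gives
\[
\sum_{i \in [c]} P_i \;=\; \sum_{v \in S} \sum_{i \in [c]} v_i \;=\; \sum_{v \in S} I \;=\; c\,I,
\]
since $\abs{S} = c$. Rearranging, we obtain $\sum_{i \in [c]} (I - P_i) = 0$.

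Each summand $I - P_i$ is positive semidefinite by the first step, and a sum of positive semidefinite operators equals zero only if every summand is zero. Hence $I - P_i = 0$, i.e., $P_i = I$, for every $i \in [c]$, which is exactly what we wanted to prove.

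This argument is essentially immediate once one sees the two ingredients (pairwise orthogonality of the $v_i$ across the clique, and the completeness relation summed over the clique), so there is no real obstacle; the only thing to be slightly careful about is the final step, where one must invoke the fact that a sum of PSD operators can vanish only if each is individually zero, rather than trying to argue via traces or ranks.
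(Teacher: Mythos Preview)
Your proof is correct and follows essentially the same argument as the paper: pairwise orthogonality across the clique gives $\sum_{v\in S} v_i \preceq I$, and summing the completeness relations over $S$ forces equality for every $i$. The only cosmetic difference is that the paper phrases the last step as ``equality must hold throughout the chain $cI \preceq cI$'' rather than ``a sum of PSD operators is zero only if each term is zero,'' but these are the same observation.
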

\begin{proof}
Note that $u_iv_i = 0$ for all distinct $u,v \in S$ by the conditions of quantum coloring. This implies that $\sum_{v \in S} v_i \preceq I$ for all $i \in [c]$. Next, noting that $\sum_{i \in [c]} v_i = I$ for all $v \in S$, we see that
\[cI = \sum_{v \in S} \sum_{i \in [c]} v_i = \sum_{i \in [c]} \sum_{v \in S} v_i \preceq cI,\]
as $|S| = c$. Since we must have equality throughout, we see that $\sum_{v \in S} v_i = I$ for all $i \in [c]$.
\end{proof}

The next lemma shows that if two vertices share a common neighbor, then certain products of the operators associated to those vertices in a quantum 3-coloring are zero.

\begin{lemma}\label{lem:dist2}
Let $G$ be a graph with vertices $u$ and $v$ that share a neighbor. For any quantum 3-coloring of $G$, if $\{i,j,k\} = \{1,2,3\}$ then 
\[u_iv_ju_k = 0.\]
\end{lemma}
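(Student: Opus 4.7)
The plan is to exploit the common neighbor $w$ of $u$ and $v$ and collapse the product $u_i v_j u_k$ using the orthogonality relations at $w$ together with the commutativity supplied by \Lem{commute}.

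First I would record the consequences of $u \sim w$ and $v \sim w$. By \Lem{commute}, for all $a,b \in [3]$ the operators $u_a$ and $w_b$ commute, and similarly $v_a$ and $w_b$ commute. The orthogonality condition \EqRef{Adj} also gives $u_a w_a = w_a u_a = 0$ and $v_a w_a = w_a v_a = 0$. Finally, completeness at $w$ says $w_i + w_j + w_k = I$.

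Next, since $u_i w_i = 0$, multiplying $u_i$ by this completeness relation on the right yields $u_i = u_i(w_j + w_k)$. Inserting this into the product of interest gives
\begin{equation*}
  u_i v_j u_k \;=\; u_i w_j v_j u_k \;+\; u_i w_k v_j u_k.
\end{equation*}
For the first summand, the factor $w_j v_j = 0$ since $v \sim w$, so the whole term vanishes. For the second summand, I would commute $w_k$ past $v_j$ using \Lem{commute} applied to the adjacent pair $v,w$, obtaining $u_i v_j w_k u_k$, which vanishes because $w_k u_k = 0$ (as $u \sim w$). Adding the two zero contributions gives $u_i v_j u_k = 0$.

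There is no real obstacle here beyond bookkeeping: the whole argument is a two-line manipulation once one notices that inserting the resolution of identity $w_j + w_k$ at $w$ peels $u_i v_j u_k$ apart into two pieces, each of which is killed by one of the orthogonality relations $w_a v_a = 0$ or $w_a u_a = 0$ after a single commutation move justified by \Lem{commute}.
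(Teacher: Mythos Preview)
Your proof is correct and uses essentially the same idea as the paper: insert the completeness relation $w_i + w_j + w_k = I$ for the common neighbor $w$, then kill each resulting term by combining the orthogonality relations $u_a w_a = v_a w_a = 0$ with the commutativity of \Lem{commute}. The only cosmetic difference is placement---the paper inserts the identity on the far right of $u_i v_j u_k$ (yielding three terms, each commuted to expose a zero factor), whereas you insert it immediately after $u_i$ and pre-kill the $w_i$ term, leaving just two.
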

\begin{proof}
Let $w \in V(G)$ be the common neighbor shared by $u$ and $v$. By \Lem{commute}, we have that the operator $w_i$ commutes with both the operators $u_j$ and $v_j$ for all $i,j \in [3]$. Of course we also have that $u_iw_i = 0 = v_iw_i$ for all $i \in [3]$. Using these and the fact that $\sum_{i=1}^3 w_i = I$, it is easy to see that if $\{i,j,k\} = \{1,2,3\}$, then
\begin{align*}
u_iv_ju_k &= u_iv_ju_k(w_i + w_j + w_k) \\
&= u_iv_ju_kw_i + u_iv_ju_kw_j + u_iv_ju_kw_k \\
&= [u_iw_i]v_ju_k + u_i[v_jw_j]u_k + u_iv_j[u_kw_k] \\
&= 0
\end{align*}
\end{proof}

The reader may have noticed that in the second to last line of the equation in the above proof, we put square brackets around the expressions $u_iw_i, v_jw_j,$ and $u_kw_k$. This was to point out that these expressions are easily seen to be equal to zero. For the rest of this section, we will put square brackets around expressions that we know are equal to zero, either by elementary arguments or through the use of the lemmas presented here.

The following lemma requires a little more work, and it specifically concerns the graph $G_{13}$. 

\begin{lemma}\label{lem:XW}
In any quantum 3-coloring of $G_{13}$, we have the following:
\begin{align*}
&X_i A_j W_i = X_i C_j W_i = 0 \ \text{ for } i \ne j, \text{ and } \\
&X_iW_i = X_i A_i W_i = X_i C_i W_i \ \text{ for all } i \in [3].
\end{align*}
\end{lemma}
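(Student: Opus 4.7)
The plan is to prove the two first-line vanishings separately; the second line then follows at once from $X_i W_i = X_i\bigl(\sum_k A_k\bigr)W_i = X_i A_i W_i$ by the vanishing $X_i A_j W_i = 0$ for $j \ne i$, and analogously for $C$. For the $A$-case I would use a direct sandwich. Let $s = (0,1,1)\tp$ and $\overline{s} = (0,1,-1)\tp$ be the (unique) common neighbors of $\set{A,X}$ and $\set{A,W}$ respectively; together with $A$ they form the triangle $T_A$. By \Lem{commute}, each $s_k$ commutes with $X_\ell$ and $A_\ell$, and each $\overline{s}_k$ commutes with $W_\ell$ and $A_\ell$; by \Lem{clique}, $A_k + s_k + \overline{s}_k = I$. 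Fix $i \ne j$ and let $m$ denote the remaining color. Inserting $\sum_k s_k = I$ between $A_j$ and $W_i$ in $X_i A_j W_i$ and killing the $k = j$ term by $A_j s_j = 0$ and the $k = i$ term by $X_i s_i = 0$ (using commutation) leaves $X_i A_j W_i = X_i A_j s_m W_i$. The analogous insertion with $\overline{s}$ gives $X_i A_j W_i = X_i A_j \overline{s}_m W_i$. Summing and invoking $s_m + \overline{s}_m = I - A_m$ together with $A_j A_m = 0$ forces $2 X_i A_j W_i = X_i A_j W_i$, so $X_i A_j W_i = 0$.

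For the $C$-case, the sandwich argument fails: the only common neighbor of $\set{X, C, W}$ is the single vertex $q := (1,-1,0)\tp$, and the remaining triangle-mate $p = (1,1,0)\tp$ of $T_C$ is adjacent to neither $X$ nor $W$. To get around this, I would first exploit the automorphism of $G_{13}$ that swaps coordinates 1 and 2: this fixes $X, W, C$ and sends $A$ to $B$, so applying the $A$-case to the transported coloring yields $X_i B_j W_i = 0$ for $i \ne j$. Combined with $A_j + B_j + C_j = I$, this gives $X_i C_j W_i = X_i W_i$ for $i \ne j$. It then remains to show $X_i W_i = 0$. Here $q$ does the work: as $q$ is adjacent to each of $X, C, W$, the $q_k$ commute with all of $X_\ell, C_\ell, W_\ell$, and $X_i q_i = C_i q_i = W_i q_i = 0$. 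Inserting $\sum_k q_k = I$ between $C_j$ and $W_i$ and killing the $k \in \set{i, j}$ terms (via $q_i X_i = 0$ and $q_j C_j = 0$) yields $X_i C_j W_i = X_i C_j q_m W_i$. Substituting $X_i C_j W_i = X_i W_i$ and left-multiplying by $q_j$ then gives $q_j X_i W_i = X_i q_j C_j q_m W_i = 0$; the symmetric run with the other value of $j \ne i$ gives $q_m X_i W_i = 0$. Together with $q_i X_i W_i = 0$, the resolution $X_i W_i = \sum_k q_k X_i W_i$ collapses to zero, proving $X_i W_i = 0$ and hence $X_i C_j W_i = 0$.

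The main obstacle is the structural asymmetry between the triangles $T_A$ and $T_C$: $T_A$'s two non-$A$ vertices split cleanly between the neighborhoods of $X$ and $W$, enabling the two-sided sandwich, whereas $T_C$'s only non-$C$ vertex common to both is $q$. Bridging this gap via the $B$-detour (enabled by a simple coordinate-swap automorphism) followed by a $q$-collapse to $X_i W_i = 0$ is the essential extra work for the $C$-case.
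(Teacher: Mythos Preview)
Your argument rests on taking $X=(1,1,-1)\tp$, but in the paper's labeling $X=(1,-1,1)\tp$; this is fixed by the figure and is confirmed independently by the line ``$B,X,$ and $W$ all share $P$ as a neighbor'' in the proof of \Thm{notq3col}. With the correct $X$ your diagnosed asymmetry between $T_A$ and $T_C$ disappears: in the $C$-triangle $\{C,Q,R\}$ one has $X\sim(1,1,0)\tp$ and $W\sim(1,-1,0)\tp$, so the identical one-shot argument as for $A$ goes through. Concretely, for the paper's $X$ the swap of coordinates $1$ and $2$ does \emph{not} fix $X$, and $q=(1,-1,0)\tp$ is \emph{not} a neighbor of $X$, so the $B$-detour and the $q$-collapse are both built on adjacency claims that are false for the vertex actually named $X$. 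The paper's proof simply inserts the triangle resolution $A_i+L_i+M_i=I$ (respectively $C_i+Q_i+R_i=I$) once and kills the three resulting terms via $A_jA_i=0$, $X_iL_i=0$ (after commuting $L_i$ past $A_j$), and $M_iW_i=0$; your two-insertions-then-sum variant for the $A$-case is correct but less direct than this.

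That said, your $C$-case computation is internally consistent and correct for the vertex $(1,1,-1)\tp$, and it in fact establishes the much stronger conclusion $X_iW_i=0$ --- the heart of \Thm{notq3col} --- by a route considerably shorter than the paper's. Translated to the paper's labeling (where the ``bad'' triangle is $T_B$ rather than $T_C$, and the common neighbor of $X,B,W$ is $P=(1,0,-1)\tp$), your collapse argument would give a neat shortcut to the main theorem. For \Lem{XW} itself, however, none of this extra machinery is needed.
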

\begin{proof}
We prove that $X_i A_j W_i = 0$ first. Since the vertices $A, L$, and $M$ form a triangle in $G_{13}$, we have that $A_i + L_i + M_i = I$ by \Lem{clique}. Also, note that all these operators commute with $A_j$ by \Lem{commute} and that $X_iL_i = 0 = M_iW_i$, since $X \sim L$ and $M \sim W$. Therefore, if $i \ne j$ then
\begin{align*}
X_iA_jW_i &=  X_iA_j(A_i + L_i + M_i)W_i \\
&= X_i[A_jA_i]W_i + X_iA_jL_iW_i + X_iA_j[M_iW_i] \\
& = 0 + [X_iL_i]A_jW_i + 0 \\
&= 0.
\end{align*}
The same argument, but using $C_i + Q_i + R_i = I$, shows that $X_iC_jW_i = 0$ if $i \ne j$.

Using the facts that $X_iA_jW_i = 0$ for $i \ne j$ and $A_1 + A_2 + A_3 = I$, we have that
\[X_iW_i = X_i (A_1 + A_2 + A_3)W_i = X_iA_iW_i,\]
and $X_iW_i = X_iC_iW_i$ follows similarly.
\end{proof}

In order to prove that $G_{13}$ is not quantum 3-colorable, we will make frequent use of the above lemmas and so we will not always explicitly point out when we use them. However, after a few applications one easily becomes accustomed to their use.

\begin{theorem}
The graph $G_{13}$ is not quantum 3-colorable.
\label{thm:notq3col}
\end{theorem}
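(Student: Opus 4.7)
The plan is to assume a quantum 3-coloring of $G_{13}$ exists and derive a contradiction. The first step is to extend \Lem{XW} uniformly to all of $\{W,X,Y,Z\}$. By \hyperref[fact:autos]{Fact~\ref*{fact:autos}}, any ordered pair of distinct vertices of $\{W,X,Y,Z\}$ can be mapped to $(X,W)$ by an automorphism of $G_{13}$; thus for every such pair $(u,v)$ there is a unique \emph{excluded} vertex $\beta(u,v)\in\{A,B,C\}$---the one corresponding to the coordinate in which the $\pm1$ entries of $u$ and $v$ disagree---such that, writing $\{\alpha,\gamma\}=\{A,B,C\}\setminus\{\beta(u,v)\}$,
$$u_iv_i \;=\; u_i\alpha_iv_i \;=\; u_i\gamma_iv_i, \qquad \text{and hence} \qquad u_i\,\beta(u,v)_i\,v_i \;=\; -u_iv_i$$
for every $i\in[3]$. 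The combinatorial pattern here is that the three perfect matchings of $K_4$ on $\{W,X,Y,Z\}$ correspond bijectively to the three elements of $\{A,B,C\}$ via which vertex is excluded.

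Next, for every $\alpha\in\{A,B,C\}$ there is a sign $\epsilon\in\{\pm1\}$ with $u_iv_i=\epsilon\,u_i\alpha_iv_i$ and, by taking adjoints, $v_iu_i=\epsilon\,v_i\alpha_iu_i$; the two signs combine to give
$$u_iv_iu_i \;=\; u_i\alpha_iv_i\alpha_iu_i \;=\; (\alpha_iu_i)\ct v_i(\alpha_iu_i) \;\preceq\; (\alpha_iu_i)\ct(\alpha_iu_i) \;=\; u_i\alpha_iu_i,$$
using $v_i\preceq I$. Summing over the three choices of $\alpha$ and invoking $A_i+B_i+C_i=I$ (by \Lem{clique} applied to the triangle $\{A,B,C\}$) yields the key operator inequality
$$3\,u_iv_iu_i \;\preceq\; u_i$$
for every distinct $u,v\in\{W,X,Y,Z\}$ and every $i\in[3]$.

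The remainder of the argument must upgrade this quantitative bound to the sharp identity $u_iv_iu_i=0$, which is equivalent to $u_iv_i=0$ since $u_iv_iu_i=(v_iu_i)\ct(v_iu_i)$. Once this is established, it forces $W_i,X_i,Y_i,Z_i$ to be pairwise orthogonal projectors at each color, so $S_i:=W_i+X_i+Y_i+Z_i\preceq I$, and summing over $i$ gives $\sum_iS_i\preceq 3I$; but $\sum_iv_i=I$ for each $v\in\{W,X,Y,Z\}$ forces $\sum_iS_i=4I$, which is the desired contradiction. The main obstacle is precisely this last step: the bound $3u_iv_iu_i\preceq u_i$ is not by itself sharp enough, so additional input will be needed. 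I expect \Lem{dist2} to enter here---since any two distinct vertices of $\{W,X,Y,Z\}$ share a common neighbor among $\{L,M,N,P,Q,R\}$---together with the matching structure above, in which the two pairs sharing the same excluded vertex (for instance $\{W,X\}$ and $\{Y,Z\}$, both excluding $B$) provide reinforcing cancellations. The bracket convention introduced just before the statement, flagging subexpressions already known to vanish, will almost certainly be indispensable for organizing this chain of substitutions in a way that remains verifiable by the reader.
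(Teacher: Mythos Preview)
Your proposal is not a proof: it is a plan with an explicit hole. You correctly set up the endgame---once $u_iv_i=0$ for all distinct $u,v\in\{W,X,Y,Z\}$, the projectors $W_i,X_i,Y_i,Z_i$ are mutually orthogonal, so $\sum_i(W_i+X_i+Y_i+Z_i)\preceq 3I$ while also equalling $4I$, a contradiction. You also extract a pleasant inequality $3\,u_iv_iu_i\preceq u_i$ from the sign pattern of \Lem{XW}. But then you write that this bound ``is not by itself sharp enough, so additional input will be needed,'' and from that point on you only speculate (``I expect \Lem{dist2} to enter here'', ``reinforcing cancellations''). The hard step---upgrading $3\,u_iv_iu_i\preceq u_i$ to $u_iv_i=0$---is precisely the content of the theorem, and you have not supplied it. There is no evident way to squeeze $u_iv_i=0$ out of your inequality alone (summing it over $v$ or over $i$ only reproduces soft bounds like $\sum_i u_iv_iu_i\preceq \tfrac{1}{3}I$), so the promised ``additional input'' is doing all the work.

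For comparison, the paper does not use any operator inequality at this stage; it proves $X_1W_1=0$ by direct algebraic manipulation. One writes $X_1W_1=(B_1+B_2+B_3)X_1W_1$ and kills each summand. For $B_2X_1W_1$ and $B_3X_1W_1$, one inserts $A_j+B_j+C_j=I$ (with $j\neq 1$) between $X_1$ and $W_1$; the $A$- and $C$-terms vanish by \Lem{XW}, and the $B$-term vanishes by \Lem{dist2} applied to $B,X$ (common neighbor $P$). For $B_1X_1W_1B_1$, one uses $X_1W_1=X_1C_1W_1$ from \Lem{XW}, expands $(I-X_2-X_3)C_1(I-W_2-W_3)$, then rewrites $C_1=(I-A_1)(I-B_1)=A_2B_3+A_3B_2$ and commutes the $A$- and $B$-factors past $X$ and $W$; every surviving term contains a subword $B_1X_jB_k$ or $B_jW_kB_1$ with $\{1,j,k\}=\{1,2,3\}$, which \Lem{dist2} annihilates. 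Thus \Lem{dist2} enters not to sharpen a positivity bound but to kill specific degree-three monomials; your ``reinforcing cancellations'' intuition does not match how the argument actually closes.
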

\begin{proof}
Our initial goal is to show that in any quantum 3-coloring of $G_{13}$, the operators $X_1$ and $W_1$ are orthogonal, i.e., that $X_1W_1 = 0$. The first thing to note is that any pair of nonadjacent vertices in $G_{13}$ share a common neighbor. Therefore, we can apply \Lem{dist2} to any such pair of vertices.

We have that $I = B_1 + B_2 + B_3$, and therefore
\[X_1W_1 = B_1X_1W_1 + B_2X_1W_1 + B_3X_1W_1.\]
We will show that each term in the above sum is zero individually.

Since $A,B,$ and $C$ form a triangle in $G_{13}$, by \Lem{clique} we have that $A_3+B_3+C_3 = I$. Therefore, 
\[B_2X_1W_1 = B_2X_1(A_3+B_3+C_3)W_1 = B_2[X_1A_3W_1] + [B_2X_1B_3]W_1 + B_2[X_1C_3W_1].\]
By ignoring the $B_2$ factor on the first and last summands above, applying \Lem{XW} shows that these terms are zero. The second summand is zero since $B_2X_1B_3 = 0$ by \Lem{dist2}, as $B$ and $X$ share $P$ as a neighbor. Thus $B_2X_1W_1 = 0$. Note that the same argument shows that $X_1W_1B_2 = 0$, except that we would use the fact that $B_3W_1B_2 = 0$ in the last step. Similarly, one can show that $B_3X_1W_1 = 0 = X_1W_1B_3$, by placing a factor of $A_2 + B_2 + C_2 = I$ between $X_1$ and $W_1$.

To conclude that $X_1W_1 - 0$, it remains to show that $B_1X_1W_1 = 0$. Since $X_1W_1 = X_1W_1(B_1+B_2+B_3)$ and $X_1W_1B_2 = X_1W_1B_3 = 0$ by the above, it suffices to show that $B_1X_1W_1B_1 = 0$. This we proceed to do. By \Lem{XW}, we have that $X_1W_1 = X_1C_1W_1$. We can therefore rewrite $X_1W_1$ as
\begin{align*}
X_1W_1 &= X_1C_1W_1 \\
&= (I-X_2-X_3)C_1(I-W_2-W_3) \\
&= C_1 - X_2C_1 - X_3C_1 - C_1W_2 - C_1W_3 + X_2C_1W_3 + X_3C_1W_2 + [X_2C_1W_2] + [X_3C_1W_3].
\end{align*}
By \Lem{XW} we have that the last two summands above are equal to zero. Moreover, since $B_1C_1 = C_1B_1 = 0$, by sandwiching the above by $B_1$ we remove all the monomials with a $C_1$ on the far left or the far right. Therefore,
\[B_1X_1W_1B_1 = B_1X_2C_1W_3B_1 + B_1X_3C_1W_2B_1.\]
Using \Lem{clique} and the fact that $A_iB_i = 0$, we can rewrite $C_1$ as
\[C_1 = I - A_1 - B_1 = (I-A_1)(I-B_1) = (A_2+A_3)(B_2+B_3) = A_2B_3+A_3B_2.\]
Therefore, using the fact that $A_i$ and $B_j$ commute for all $i,j \in [3]$ by \Lem{commute}, we have
\begin{align*}
B_1X_1W_1B_1 =& B_1X_2(A_2B_3+A_3B_2)W_3B_1 + B_1X_3(A_2B_3+A_3B_2)W_2B_1 \\
=& [B_1X_2B_3]A_2W_3B_1 + B_1X_2A_3[B_2W_3B_1] \\
&+ B_1X_3A_2[B_3W_2B_1] + [B_1X_3B_2]A_3W_2B_1.
\end{align*}
Since $B,X,$ and $W$ all share $P$ as a neighbor, by \Lem{dist2} we have that all the terms in square brackets above are zero, and therefore $B_1X_1W_1B_1 = 0$. Together with the above, this gives $X_1W_1 = 0$ as desired.

Of course, by symmetry of the colors, we actually have that $X_iW_i = 0$ for all $i \in [3]$. Moreover, by symmetry of the graph $G_{13}$ (see Fact~\ref{fact:autos}), we have that the same holds for any pair of the vertices $W,X,Y,$ and $Z$. Therefore, the operators $W_i,X_i,Y_i,$ and $Z_i$ are mutually orthogonal and thus $W_i + X_i  + Y_i + Z_i \preceq I$ for all $i \in [3]$. Using this we have that
\[4I = \sum_{v \in \{W,X,Y,Z\}} \sum_{i=1}^3 v_i = \sum_{i=1}^3 \sum_{v \in \{W,X,Y,Z\}} v_i \preceq 3I,\]
a clear contradiction. Therefore, no quantum 3-coloring of $G_{13}$ exists.
\end{proof}

We can now determine the value of $\chi_q(G_{13})$.

\begin{cor}
The quantum chromatic number of $G_{13}$ is four.
\end{cor}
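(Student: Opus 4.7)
The proof is essentially a two-line sandwich combining the theorem just proved with facts already established earlier in the paper.

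For the upper bound, I would invoke the explicit $4$-coloring of $G_{13}$ exhibited in \Sec{Classical}, namely the partition of $V(G_{13})$ into the four independent sets $\{A,W,X,Y,Z\}$, $\{B,L,R\}$, $\{C,M,N\}$, and $\{P,Q\}$. This shows $\chi(G_{13}) \le 4$, and since any classical $c$-coloring can be regarded as a $1$-dimensional quantum $c$-coloring (as noted in the discussion immediately following \Def{Qchrom}), we obtain $\chi_q(G_{13}) \le \chi(G_{13}) \le 4$.

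For the lower bound, \Thm{notq3col} tells us that $G_{13}$ is not quantum $3$-colorable, so $\chi_q(G_{13}) \ge 4$ provided we also rule out $\chi_q(G_{13}) \le 2$. The cleanest way to do this is to observe that any quantum $2$-coloring $(v_1,v_2)$ can be extended to a quantum $3$-coloring by setting $v_3 := 0$ for every vertex $v$: completeness becomes $v_1+v_2+0 = I$, and the orthogonality condition $v_iw_i = 0$ for adjacent $v,w$ is trivially preserved in the new color. Thus quantum $2$-colorability implies quantum $3$-colorability, and \Thm{notq3col} rules out both at once. (Alternatively, one can note that $G_{13}$ contains the triangle $\{A,B,C\}$ and use the argument in \Sec{Qchrom} showing that any graph with an odd cycle fails to be quantum $2$-colorable.)

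Combining the two bounds gives $\chi_q(G_{13}) = 4$. There is no real obstacle here — the entire content of the corollary has already been done in \Thm{notq3col}; the proof is just a bookkeeping step that packages the lower bound from that theorem with the elementary upper bound $\chi_q \le \chi$.
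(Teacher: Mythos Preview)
Your proposal is correct and follows essentially the same approach as the paper: the paper's proof is the one-line observation that \Thm{notq3col} gives $\chi_q(G_{13}) \ge 4$ while $\chi_q(G_{13}) \le \chi(G_{13}) = 4$. You are in fact slightly more careful than the paper, which implicitly uses the monotonicity fact (a quantum $c$-coloring extends to a quantum $(c+1)$-coloring by adjoining zero projectors) that you make explicit in order to pass from ``not quantum $3$-colorable'' to ``$\chi_q \ge 4$''.
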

\begin{proof}
The above theorem establishes that $\chi_q(G_{13}) \ge 4$, but we already know that $\chi_q(G_{13}) \le \chi(G_{13}) = 4$, and so the corollary follows.
\end{proof}

\subsection{Oddities}
\label{sec:oddities}

Here we point out some interesting properties of the graph $G_{13}$ related to quantum colorings.

\paragraph{Apex vertex.}
The quantum chromatic number $\chi_q$ often behaves similar to the chromatic number $\chi$. For instance, $\chi(K_n) = \chi_q(K_n) = n$ for all complete graphs $K_n$. To our surprise, it turns out that quantum and classical chromatic numbers can behave differently when it comes to apex vertices. It is clear that in any valid coloring of $G$ the color assigned to any apex vertex is different from all the colors assigned to the remaining vertices. Therefore, the addition of an apex vertex increases the chromatic number by one. It turns out that this is not the case for the quantum chromatic number.

\begin{lemma}
There exists a graph for which the addition of an apex vertex leaves the quantum chromatic number unchanged.
\end{lemma}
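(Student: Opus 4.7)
The plan is to prove the lemma by taking $G = G_{13}$ and showing that $\chi_q(G_{14}) = \chi_q(G_{13}) = 4$, where $G_{14}$ is the graph obtained by adding the apex vertex $\Omega$ as described in \Sec{G13}. The lower bound is immediate: since $G_{13}$ is a subgraph of $G_{14}$, and quantum $c$-colorability is inherited by subgraphs (just restrict the projectors to vertices of the subgraph), we get $\chi_q(G_{14}) \ge \chi_q(G_{13}) = 4$ using the corollary of \Thm{notq3col}.

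For the matching upper bound $\chi_q(G_{14}) \le 4$, the idea is to invoke \Lem{realortho2qc} applied to a real four-dimensional orthogonal representation of $G_{14}$. Such a representation is already given in \Sec{G13}: append a zero coordinate to each of the thirteen vectors in $V$, and assign $\Omega \mapsto (0,0,0,1)\tp$. Adjacent vertices from $V(G_{13})$ remain orthogonal in $\R^4$ because their first three coordinates were already orthogonal in $\R^3$, and $\Omega$ is orthogonal to every other assigned vector because all of them have a zero in the fourth coordinate while $\Omega$'s vector is supported only there. Feeding this representation into the quaternion construction of \Lem{realortho2qc} yields a quantum $4$-coloring of $G_{14}$.

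Combining the two bounds, $\chi_q(G_{14}) = 4 = \chi_q(G_{13})$, so adding the apex vertex $\Omega$ did not increase the quantum chromatic number, proving the lemma. I do not expect any significant obstacle: the nontrivial work (namely \Thm{notq3col} and \Lem{realortho2qc}) has already been done, and this argument is essentially a packaging of those two results together with the four-dimensional orthogonal representation of $G_{14}$ observed at the end of \Sec{G13}. The only thing to double-check is that the statement of \Lem{realortho2qc} does not require the representation to use unit vectors with any additional property beyond being real and of dimension four; inspection of its proof via quaternions confirms that only orthogonality of adjacent vectors (and nothing about which coordinates vanish) is used, so the argument goes through verbatim.
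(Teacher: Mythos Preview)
Your proposal is correct and follows essentially the same argument as the paper: use $G_{13}$ and $G_{14}$, get the lower bound $\chi_q(G_{14}) \ge \chi_q(G_{13}) = 4$ from the subgraph relation and \Thm{notq3col}, and get the upper bound $\chi_q(G_{14}) \le 4$ by applying \Lem{realortho2qc} to the four-dimensional real orthogonal representation of $G_{14}$ described in \Sec{G13}. The only cosmetic point is that \Def{Orank} and the quaternion construction take unit vectors, so one should normalize the vectors in $V$ (and $(0,0,0,1)\tp$ is already unit), but this is harmless since orthogonality is unaffected.
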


\begin{proof}
We use $G_{13}$ and $G_{14}$ to establish the lemma. Since we already know that $\chi_q(G_{13}) = 4$ it remains to show that $\chi_q(G_{14})=4$. To this end, note that $G_{14}$ has a four-dimensional orthogonal representation, and so by \Lem{realortho2qc} we have $\chi_q(G_{14}) \le 4$. Since $G_{14}$ contains $G_{13}$ as a subgraph, we have that $\chi_q(G_{14}) \ge \chi_q(G_{13}) = 4$ and hence $\chi_q(G_{14}) = 4$.
\end{proof}

\paragraph{Smallest known example with $\chi>\chi_q$.}

Prior to this work, there were very few \emph{small} graphs with quantum chromatic number known to be strictly less than chromatic number. To the best of our knowledge, the smallest such example was a graph on 18 vertices from~\cite{Cameron07}. With $G_{14}$, we improve this to 14 vertices, and suspect that this is the smallest possible.

\paragraph{Only known example with $\xi(G) < \chi_q(G)$.}

\Thm{notq3col} tells us that the orthogonal rank of $G_{13}$ is strictly less than its quantum chromatic number. To someone who has seen several examples of quantum colorings this may seem odd since almost all known quantum $c$-colorings are constructed from $c$-dimensional orthogonal representations. Thus, one might think that a quantum $c$-coloring can always be constructed from a $c$-dimensional orthogonal representation, but we see here that this is not the case.

Part of the reason that no separation between $\xi$ and $\chi_q$ was previously known is that there are no known graph parameters that lower bound quantum chromatic number but do not also lower bound orthogonal rank, or vice versa. This is related to our next interesting property of $G_{13}$.

\section{Orthogonal rank and quantum chromatic number are incomparable}
\label{sec:Incomp}

In the previous section we saw that $\chi_q(G_{13})=4$ and therefore it is possible for orthogonal rank to be strictly smaller than the quantum chromatic number of a graph. In this section our goal is to show that strict inequality can hold in the other direction: orthogonal rank can be strictly smaller than quantum chromatic number. This will mean that the parameters $\xi$ and $\chi_q$ do not satisfy either inequality for all graphs, \ie, they are incomparable.

Both \cite{Fukawa11,Ji12} have investigated quantum 3-colorings of graphs arising from \TSAT{} to \TCOL{} reductions. Here, \TSAT{} is the decision problem whose instances are boolean formulas in conjunctive normal form $\bigwedge_i C_i$, where each of the clauses $C_i$ is a disjunction of exactly three literals (\ie, variables $x_j$ or their negations $\overline{x}_j$). The \YES-instances are the satisfiable formulas, while \NO-instances are the unsatisfiable ones. For example, the formula $(\overline{x}_1 \vee x_2) \wedge (x_1 \vee \overline{x}_2 \vee x_3)$ is a \YES-instance, since it evaluates to \true{} if we set $x_1 = x_2 = x_3=\true$. The instances of \TCOL{} are graphs and the \YES-instances are the graphs which admit a 3-coloring while the \NO-instances do not. For example, the graph $G_{13}$ is a \NO-instance. The basic idea of reductions is to efficiently transform instances of one decision problem to another, so that we can decide the former problem by running an algorithm for solving the latter one. In the case of reducing \TSAT{} to \TCOL{}, we want to translate a \TSAT{} formula $f$ into a graph $G_f$, so that $f$ is satisfiable if and only if $G_f$ is 3-colorable. We now describe one of the standard reductions between the two decision problems which is also used in \cite{Ji12}.  
The graphs arising from this reduction will have a particular structure and not every graph can be obtained in this way.

Given a \TSAT{} formula $f$ with boolean variables $x_1,\dotsc,x_n$ and clauses $C_1,\dotsc, C_m$, we make a graph $G_f$ on $2n+6m+3$ vertices. There are three special vertices labeled, $T$ (for \true), $F$ (for \false), and $B$ (see \Fig{Gadget}).  Each of the variables $x_j$ is represented by two vertices, labeled $x_j$ and $\overline{x}_j$. Intuitively, the former vertex corresponds to setting $x_j = \true$ in the formula $f$, while the latter corresponds to setting $x_j = \false$. Finally, each of the clauses $C_i$ is represented by six vertices three of which are connected to the literals appearing in the clause $C_i$. A gadget encoding clause $x_1 \vee \overline{x}_2 \vee \overline{x}_3$ is shown in \Fig{Gadget} with the six clause-specific vertices enclosed by a dotted box.

\begin{figure}[h]
\centering
\includegraphics{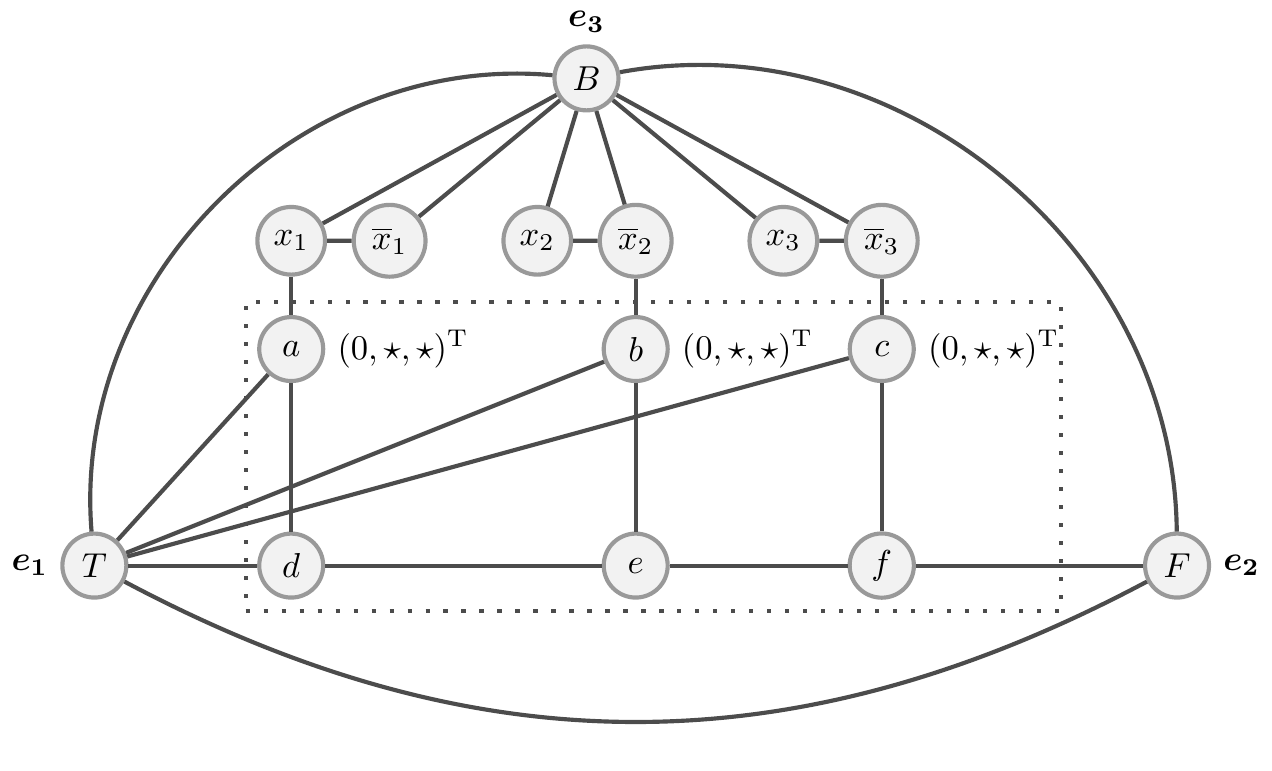}
\caption{Gadget for \TSAT{} to \TCOL{} reduction corresponding to clause $x_1 \vee \overline{x}_2 \vee \overline{x}_3$. Each of the variables $x_i$ appearing in a \TSAT \ formula is represented by two vertices: one corresponding to $x_i$ being true and the other corresponding $\overline{x}_i$ (negation of $x_i$) being true. Each of the clauses is represented by six vertices which are enclosed by a dotted box in the above example. Since variable $x_1$ appears without negation in the clause $x_1 \vee \overline{x}_2 \vee \overline{x}_3$, we connect vertex $a$ to $x_1$; additionally, we connect vertices $b$ and $c$ to $\overline{x}_2$ and $\overline{x}_3$ respectively.}
\label{fig:Gadget}
\end{figure}

It is well known that a \TSAT{} instance $f$ is satisfiable if and only if the graph $G_f$ is 3-colorable. In \cite{Ji12}, the author considers quantum strategies for a nonlocal game $\mathcal{G}_f$ where two provers are aiming to convince the verifier that a given \TSAT{} instance $f$ is satisfiable. They show that perfect quantum strategies for $\mathcal{G}_f$ can be translated into quantum $3$-colorings of the graph $G$ obtained from the \TSAT{} to \TCOL{} reduction we described above. It follows that if $f$ is an unsatisfiable \TSAT{} instance such that there nevertheless exists a perfect quantum strategy for the game $\mathcal{G}_f$, then the graph $G$ will be quantum 3-colorable but not 3-colorable. For a specific example of such a \TSAT{} instance $f$, one can use a \TSAT{} instance $f$ corresponding to the magic square game, as is done in~\cite{Fukawa11}. We therefore have the following:

\begin{fact}
There exists a graph $G$ arising from the \TSAT{} to \TCOL{} reduction for which $\chi_q(G) = 3$ but $\chi(G) > 3$.
\label{fact:Fukawa}
\end{fact}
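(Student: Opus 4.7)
The plan is to instantiate the assertion with a graph built from a \TSAT{} formula that encodes the Mermin--Peres magic square game, and then lean on the result of Ji~\cite{Ji12} cited in the paragraph above. The overall workflow is: take the magic square game, rewrite its constraints as a \TSAT{} formula $f$, argue that $f$ is unsatisfiable (so $\chi(G_f)>3$ by correctness of the reduction), and build a perfect quantum strategy for the associated nonlocal game $\mathcal{G}_f$ from the known perfect quantum strategy for magic square (so $\chi_q(G_f)\le 3$ via Ji's translation).

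First I would make the \TSAT{} encoding explicit. The magic square game asks Alice to $\pm 1$-label the three cells of one of three rows so that the product of her labels is $+1$, and Bob to label one of three columns so that the product is $-1$, with agreement on the shared cell. Introducing a Boolean variable $x_{ij}$ for each of the nine cells (identifying $\true=+1$), each row/column product constraint is a Boolean function of three variables, hence expressible as a conjunction of clauses of width at most three; clauses of width less than three can be padded with fresh dummy variables in the standard way to produce an equivalent \TSAT{} instance $f$. Classical impossibility of the magic square game is exactly unsatisfiability of these six parity constraints, so $f$ is a \NO-instance of \TSAT{}, and by the correctness of the reduction the graph $G_f$ is not 3-colorable. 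The trivial containment of the triangle $\{T,F,B\}$ in $G_f$ further gives $\chi_q(G_f)\ge 3$ by the discussion at the end of \Sec{Qchrom}.

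It remains to produce a quantum 3-coloring of $G_f$. Here I would invoke Ji's theorem that a perfect quantum strategy for $\mathcal{G}_f$ yields a quantum 3-coloring of $G_f$. The perfect quantum strategy for the magic square game (two maximally entangled qubit pairs together with the standard Pauli-observable assignment to rows and columns) can be lifted to a perfect quantum strategy for $\mathcal{G}_f$: the projectors associated to a literal vertex $x_j$ or $\overline{x}_j$ come from the measurement outcomes of the corresponding Pauli observable, while the projectors for $T,F,B$ are chosen to diagonalize in the computational basis, and those for the six vertices of each clause gadget are assembled from products of these via the case analysis built into the classical reduction. Applying Ji's translation gives a quantum 3-coloring, so $\chi_q(G_f)=3<\chi(G_f)$, witnessing the fact.

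The main obstacle is the verification in the last step: one must check that the clause-gadget projectors produced from the magic-square operators genuinely satisfy completeness \EqRef{Complete} and adjacency \EqRef{Adj} at every edge of \Fig{Gadget}, not merely on the classical skeleton. This is essentially a careful unwinding of Ji's construction applied to the magic-square observables; it is routine but tedious, which is why I would cite \cite{Ji12,Fukawa11} for the details rather than redo the gadget-by-gadget calculation.
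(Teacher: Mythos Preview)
Your proposal is correct and follows essentially the same route as the paper: take a \TSAT{} instance $f$ encoding the magic square game, use unsatisfiability of $f$ (hence $\chi(G_f)>3$ by correctness of the reduction), and invoke Ji's translation of a perfect quantum strategy for $\mathcal{G}_f$ into a quantum 3-coloring of $G_f$, citing~\cite{Ji12,Fukawa11} for the details. The paper's justification is precisely this outline, given in the paragraph preceding the fact, with the same citations; you simply flesh out the encoding and the gadget-projector construction a bit more than the paper does.
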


We remark that a similar observation is made in~\cite{Fukawa11}. However, there the authors employ a slightly different \TSAT{} to \TCOL{} reduction and thus the graph arising from their construction has a slightly different structure.

Using the structure of graphs $G$ arising from the \TSAT{} to \TCOL{} reduction, we argue that any orthogonal representation of such a graph $G$ can be converted into a 3-coloring.

\begin{lemma}
If $G$ is a graph arising from the \TSAT{} to \TCOL{} reduction then $\xi(G) = 3$ implies that $\chi(G) = 3$.
\label{lem:Xi}
\end{lemma}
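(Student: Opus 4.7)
The plan is to show that any $3$-dimensional orthogonal representation of $G_f$ already assigns, up to phases, standard basis vectors to the vertices — so it is essentially a proper $3$-coloring. Let $\varphi \colon V(G_f) \to \C^3$ be a unit orthogonal representation. Since $T, F, B$ form a triangle in $G_f$ (see \Fig{Gadget}), their images are pairwise orthogonal, and after applying a global unitary I may assume $\varphi(T) = \vec{e_1}$, $\varphi(F) = \vec{e_2}$, and $\varphi(B) = \vec{e_3}$. From the construction, every non-special vertex — each literal vertex as well as each internal clause-gadget vertex — is adjacent to $B$, and therefore its image lies in the $2$-dimensional subspace $U := \spn(\vec{e_1}, \vec{e_2})$. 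In addition, for each variable $j$, the edge $x_j \sim \overline{x}_j$ forces $\{\varphi(x_j), \varphi(\overline{x}_j)\}$ to be an orthonormal basis of $U$.

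Next I would analyze a single clause gadget. Classical correctness of the reduction says that, once $T, F, B$ are $3$-colored canonically and every gadget vertex is forced to receive a color in $\{T, F\}$, the gadget admits a valid extension exactly when at least one of its three attached literal vertices receives the color $T$. My plan is to mirror this argument vector-theoretically. The gadget contains an internal vertex adjacent to both $F$ and $B$ — the ``output wire'' of the OR-gate — whose image is therefore pinned to $\C\vec{e_1}$. Inside the $2$-dimensional space $U$, a unit vector orthogonal to a given unit vector is unique up to a phase, so orthogonality relations in $U$ are essentially as rigid as in $\R^2$. Walking through the gadget's internal edges with this rigidity in hand, I would show that every gadget vertex has image in $\C\vec{e_1} \cup \C\vec{e_2}$ and, crucially, that at least one of the three attached literal vectors must lie in $\C\vec{e_1}$. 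Declaring $x_j = \true$ exactly when $\varphi(x_j) \in \C\vec{e_1}$ then gives a satisfying assignment of $f$, and by correctness of the \TSAT{} to \TCOL{} reduction this produces a proper $3$-coloring of $G_f$, so $\chi(G_f) = 3$.

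The main obstacle will be the clause-gadget propagation. Complex orthogonality in $\C^3$ is considerably weaker than real orthogonality, and a generic unit vector orthogonal to $\vec{e_3}$ is a full angle-like parameter in $U$, not a basis element. The key simplification is the confinement of every non-special vertex to the $2$-dimensional subspace $U$: inside $U$, the classical OR-gadget argument — phrased in terms of which vectors are and are not scalar multiples of $\vec{e_1}$ — transfers essentially verbatim. Carrying this out edge-by-edge through the six internal vertices, and verifying that the remaining phase freedoms cannot let a non-basis-aligned literal satisfy the gadget's orthogonality constraints, is where the real work lies; once completed, the lemma follows by applying the argument to every clause.
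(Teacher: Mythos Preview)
Your argument rests on a structural claim about $G_f$ that is false for the reduction used here: it is \emph{not} the case that every non-special vertex is adjacent to $B$. In the gadget of \Fig{Gadget}, the three ``input'' vertices $(a,j),(b,j),(c,j)$ are adjacent to $T$, not to $B$, and the three ``output'' vertices $(d,j),(e,j),(f,j)$ sit on the $5$-cycle $T,(d,j),(e,j),(f,j),F$ and need not be adjacent to $B$ at all. So you cannot confine the whole picture to the $2$-dimensional subspace $U=\spn(\vec{e_1},\vec{e_2})$; indeed, if every gadget vertex really lived in $U$, the gadget would have to be bipartite, which would destroy its role as an OR-gate.

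The paper's proof exploits precisely the interplay you are missing: literal vectors lie in $\spn(\vec{e_1},\vec{e_2})$ (orthogonal to $\vec{r_B}$), while the input gadget vectors lie in $\spn(\vec{e_2},\vec{e_3})$ (orthogonal to $\vec{r_T}$). If a literal vector is $(a,b,0)\tp$ with $a,b$ both nonzero, then any adjacent gadget vector of shape $(0,\star,\star)$ is forced to $(0,0,\star)$, so one may safely replace the literal vector by $\vec{e_1}$ (and its negation by $\vec{e_2}$) without breaking any orthogonality. After this normalization every literal vector is a standard basis vector, and the paper finishes by showing---via the $5$-cycle $T,(d,j),(e,j),(f,j),F$---that the three literals feeding a clause cannot all be $\vec{e_2}$, after which the classical completion applies. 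Your plan to ``walk through the gadget inside $U$'' cannot substitute for this, because the gadget is not contained in $U$.
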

\begin{proof}
Let $G$ be a graph arising from a \TSAT{} to \TCOL{} reduction with $\xi(G) = 3$. Suppose that vectors $\vec{r_v}\in\C^3$ for $v\in V(G)$ give an orthogonal representation of $G$. Step-by-step we will modify this orthogonal representation so that in the end each of the assigned vectors is one of the three standard basis vectors $\vec{e_1},\vec{e_2},\vec{e_3}\in\C^3$. Obtaining such an orthogonal representation completes the proof as identifying $\vec{e_i}$ with color $i$ gives a 3-coloring of $G$.

Recall that the graph $G$ has three types of vertices: two vertices representing each of the variables $x_k$ (for brevity referred to as $k$ and $\overline{k}$ in this proof), six vertices representing each of the constraints $C_j$ (referred to as $(a,j), (b,j),\dotsc,(f,j)$), and three distinguished vertices $T$, $F$, and $B$. Since the vectors $\vec{r_T},\vec{r_F}$, and $\vec{r_B}$ are mutually orthogonal, there exists a unitary $U$ which maps these three vectors to  $\vec{e_1},\vec{e_2}$, and $\vec{e_3}$, respectively. Since the application of a fixed unitary preserves all the orthogonalities, as our first modification, we apply $U$ to all the assigned vectors $\vec{r_v}$. Now, for any constraint $C_j$, all of the vertices $(a,j), (b,j), (c,j)$ are adjacent to $T$ and hence are assigned vectors of the form $(0,\star, \star)$, where $\star$ denotes an unknown complex number which can be different at each occurrence (see \Fig{Gadget}). Similarly, the vectors $\vec{r_{k}}$ and $\vec{r_{\overline{k}}}$ assigned to vertices representing any of the variables $x_k$ must take the form $(\star,\star,0)$.

Suppose that some vertex $k$ has been assigned a vector $\vec{r_k} = (a,b,0)\tp$, where both of $a,b\in\C$ are nonzero (if only one of $a$ and $b$ are nonzero, then $\vec{r_k}$ is already proportional to either $\vec{e_1}$ or $\vec{e_2}$). To ensure orthogonality,  the vector $\vec{r_{\overline{k}}}$ must equal $(-b^*,a^*,0)\tp$ up to an (irrelevant) overall phase factor. Recall that  the vertices $(a,j),(b,j),(c,j)$ are assigned vectors of the form $(0,\star,\star)$. So if any of these vertices are adjacent to $k$ or $\overline{k}$, then the vectors assigned to them must be further restricted to take the form $(0,0,\star)$. Therefore, setting $\vec{r_{k}}=\vec{e_1}$ and $\vec{r_{\overline{k}}}=\vec{e_2}$ yields a valid orthogonal representation. At this point all but the constraint-specific vertices of $G$ are assigned one of the standard basis vectors. 

Now consider the vertices $(a,j),(b,j),\dotsc, (f,j)$ representing some constraint $C_j$ (see \Fig{Gadget}). Suppose for contradiction that all three of the neighbors of $(a,j), (b,j), (c,j)$ representing variables are assigned $\vec{e_2}$. Then all of the vectors $\vec{r_{(a,j)}}, \vec{r_{(b,j)}}, \vec{r_{(c,j)}}$ are proportional to $\vec{e_3}$. Furthermore, all of the vectors $\vec{r_{(d,j)}}, \vec{r_{(e,j)}}, \vec{r_{(f,j)}}$ must be proportional to either $\vec{e_1}$ or $\vec{e_2}$. However, this is a contradiction since this would provide a 2-coloring of the 5-cycle consisting of vertices $T, (d,j), (e,j), (f,j), F$. Therefore, at least one of the three neighbors of $(a,j), (b,j), (c,j)$ representing variables must be assigned $\vec{e_1}$. At this point the orthogonal representation can be completed using standard basis vectors in the same way as a 3-coloring is constructed in the proof of the classical reduction of \TSAT{} to \TCOL. Briefly, the vertex among $(a,j), (b,j), (c,j)$ whose neighbor was assigned $\vec{e_1}$ should be assigned $\vec{e_2}$, and its neighbor among $(d,j), (e,j), (f,j)$ should be assigned $\vec{e_3}$. The remaining two vertices among $(a,j), (b,j), (c,j)$ should be assigned $\vec{e_3}$ as well. Only two vertices remain and it is easy to see that they can each be assigned one of $\vec{e_1}$ and $\vec{e_2}$. Doing this for every constraint yields an orthogonal representation of $G$ which uses only standard basis vectors thus completing the proof.
\end{proof}

Combining Fact~\ref{fact:Fukawa} with \Lem{Xi} yields the first example of graph whose orthogonal rank is known to exceed its quantum chromatic number.

\begin{cor}
If $G$ is the graph from Fact~\ref{fact:Fukawa}, then $\xi(G)> 3$ but $\chi_q(G)=3$. Hence, $\chi_q(G)<\xi(G)$.
\end{cor}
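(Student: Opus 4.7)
The plan is to simply combine the two building blocks, Fact~\ref{fact:Fukawa} and \Lem{Xi}, and then note a trivial lower bound on $\xi(G)$ to promote the conclusion to a strict separation $\chi_q(G) < \xi(G)$.

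First I would invoke Fact~\ref{fact:Fukawa} to pin down the graph $G$: it arises from the \TSAT{} to \TCOL{} reduction, it satisfies $\chi_q(G) = 3$, and crucially $\chi(G) > 3$. The last inequality comes from the fact that the underlying \TSAT{} instance (e.g.\ the one encoding the magic square game) is unsatisfiable, so standard correctness of the reduction rules out any classical $3$-coloring.

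Next I would apply the contrapositive of \Lem{Xi}. Since $G$ was produced by the \TSAT{} to \TCOL{} reduction and $\chi(G) > 3$, the lemma forces $\xi(G) \ne 3$. To conclude $\xi(G) > 3$ (rather than $\xi(G) < 3$, which is the only other possibility ruled out by positivity) I would point out that $G$ contains a triangle: the three distinguished vertices $T$, $F$, $B$ are mutually adjacent by construction (see \Fig{Gadget}), so any orthogonal representation must assign them three pairwise orthogonal vectors, giving $\xi(G) \ge \omega(G) \ge 3$. Combined with $\xi(G) \ne 3$, this yields $\xi(G) \ge 4$.

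Putting the pieces together gives $\chi_q(G) = 3 < 4 \le \xi(G)$, which is exactly the desired strict separation. There is no real obstacle in this argument; all the work was done in establishing \Lem{Xi} and in the prior literature that underlies Fact~\ref{fact:Fukawa}. The only subtlety worth flagging is the need to explicitly rule out $\xi(G) \le 2$, which is handled by the triangle $\{T,F,B\}$; everything else is a direct chain of implications.
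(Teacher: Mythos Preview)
Your proposal is correct and matches the paper's approach: the paper simply states that the corollary follows by combining Fact~\ref{fact:Fukawa} with \Lem{Xi}, without writing out the details. Your added observation that the triangle $\{T,F,B\}$ forces $\xi(G)\ge 3$ (so that $\xi(G)\ne 3$ upgrades to $\xi(G)>3$) is a small detail the paper leaves implicit, but it is the right justification.
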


\section*{Acknowledgments} 
Both authors are thankful to Sabine Burgdorf and Teresa Piovesan for insights and helpful discussion.

\bibliographystyle{alphaurl}
\bibliography{G13.bbl}

\end{document}